%% file: main.tex
\documentclass[aps,pra,10pt,preprintnumbers,twocolumn,amsmath,amssymb,nofootinbib,superscriptaddress]{revtex4-2}

\usepackage{graphicx}
\usepackage{dcolumn}
\usepackage{bm}
\usepackage{bbold}
\usepackage{calc}
\usepackage{qcircuit}
\usepackage{changes}
\usepackage{xcolor}
\usepackage{braket}
\usepackage{amsthm}
\usepackage[colorlinks=true,citecolor=cyan]{hyperref}
\hypersetup{colorlinks=true,citecolor=cyan,linkcolor=red,urlcolor=magenta}
\usepackage{algpseudocode}

\newtheorem{theorem}{Theorem}
\newtheorem{corollary}{Corollary}
\newtheorem{lemma}{Lemma}
\newtheorem{remark}{Remark}
\newtheorem{proposition}{Proposition}

\usepackage{color}
\usepackage{xcolor,colortbl}
\usepackage{comment}
\newcommand {\nc} {\newcommand}
\nc {\OA} [1]{\textcolor{black}{#1}}
\nc {\EJR} [1]{\textcolor{blue}{#1}}
\nc {\IB} [1]{\textcolor{blue}{#1}}
\nc {\IM} [1]{\textcolor{magenta}{#1}}
\nc {\LKC} [1] {\textcolor{orange}{[LEX: #1]}}

\newtheorem{definition}{Definition}

\usepackage[capitalise]{cleveref}
\crefname{section}{Sec.}{Secs.}
\crefname{appendix}{App.}{Apps.}
\crefname{table}{Tab.}{Tabs.}
\crefname{figure}{Fig.}{Figs.}
\crefname{definition}{Def.}{Defs.}
\crefname{lemma}{Lem.}{Lems.}
\crefname{theorem}{Thm.}{Thms.}
\crefname{corollary}{Cor.}{Cors.}
\crefname{remark}{Rem.}{Rems.}
\crefname{proposition}{Prop.}{Props.}

\begin{document}

\title{CaRBM: A Fixed-Depth Quantum Algorithm with Partial Correction for Thermal State Preparation}


\author{Omar Alsheikh}
\email{ooalshei@ncsu.edu}
\affiliation{Department of Physics and Astronomy, North Carolina State University, Raleigh, North Carolina 27695, USA}

\author{A.~F.~Kemper}
\email{akemper@ncsu.edu}
\affiliation{Department of Physics and Astronomy, North Carolina State University, Raleigh, North Carolina 27695, USA}

\author{Ermal Rrapaj}
\affiliation{Lawrence Berkeley National Laboratory, One Cyclotron RD, Berkeley, CA, 94720, USA}
\affiliation{Department of Physics, University of California, Berkeley, CA 94720, USA}
\affiliation{RIKEN iTHEMS, Wako, Saitama 351-0198, Japan}

\author{Evan J. Rule}
\email{erule@lanl.gov}
\affiliation{Theoretical Division, Los Alamos National Laboratory, Los Alamos, NM 87545, USA}

\author{Goksu C. Toga}
\email{gctoga@ncsu.edu}
\affiliation{Department of Physics and Astronomy, North Carolina State University, Raleigh, North Carolina 27695, USA}

\preprint{LA-UR-26-22202}

\begin{abstract}
We introduce the CaRBM algorithm for fixed-depth thermal state preparation. Our algorithm is based on thermal state purification and uses the Restricted Boltzmann Machine (RBM) block-encoding scheme to implement the imaginary-time propagator $e^{-\beta H}$, which is implemented in the quantum circuit in a fixed-depth manner via Cartan decomposition. Our algorithm performs best at high temperatures, with the success probability of the block encoding decreasing as the temperature decreases. To increase the success probability, we have devised a correction scheme for the block-encoding that increases the temperature range our algorithm reliably probes. We demonstrate our algorithm by calculating the partition function zeros of the XXZ model and the phase diagram of the Gross-Neveu model, which is a model of strongly interacting relativistic fermions.
\end{abstract}
\maketitle

\section{Introduction}

State preparation is a key step in almost any quantum simulation, often contributing a large portion of the cost of the simulation. The typical task for quantum algorithms is the preparation of a pure state.  However, in a realistic setting any quantum system will eventually couple to an environment, and to simulate such open systems one needs to be able to prepare mixed states. Of particular interest are thermal Gibbs states, which describe open quantum systems held at a constant temperature. This observation has led to the development of thermal state preparation algorithms that have been applied to relevant problems with varying degrees of success in quantum simulation \cite{childs2018toward}, quantum machine learning \cite{kieferova2017tomography,biamonte2017quantum}, and combinatorial optimization \cite{kirkpatrick1983optimization,somma2008quantum}.

\begin{figure}[t!]
    \centering
    \includegraphics[width=\columnwidth]{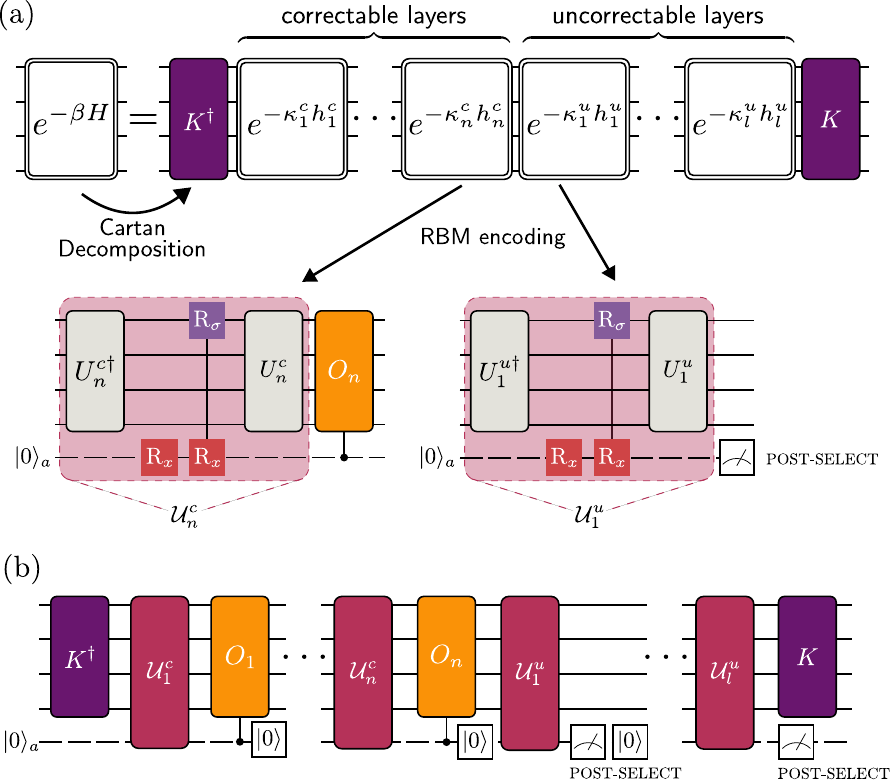}
    \caption{(a) The three building blocks of CaRBM are shown. First, the Cartan decomposition is used to transform the imaginary time propagator $e^{-\beta H}$ into a product of commuting propagators. These commuting propagators are then encoded via RBM into unitary operations by coupling the system to an ancilla. The first few layers can be corrected by adding a controlled operator on the ancilla, ensuring a perfect success rate, while the subsequent layers are encoded probabilistically by post-selecting on the ancilla. (b) The full circuit for $e^{-\beta H}$ using CaRBM.  
    }
    \label{fig:figure1}
\end{figure}

The existing techniques for thermal state preparation come with different limitations. Common issues with variational algorithms for state preparation are the expressibility of the ansatz, the high dimensional classical optimization necessary, and barren plateaus~\cite{larocca2025barren,wang2021noise}. Thermal state preparation also comes with additional challenges, mainly due to the fact that the cost function needed to generate the thermal state must minimize the free energy, requiring expensive entropy measurements. This introduces an additional hurdle on top of the already existing ones~\cite{chowdhury2020variational,sagastizabal2021variational,getelina2023adaptive}.
Variational algorithms that avoid the expensive entropy measurements have been recently proposed, but they all come with their limitations, such as being limited to low temperatures~\cite{sambasivam2025tepid}.

Alternative approaches that avoid these challenges are based on imaginary time evolution (ITE). These methods --- also known as purification --- are frequently employed in classical computing methods, like tensor networks, as well as in quantum algorithms~\cite{feiguin2005finite,feiguin2005time,verstraete2004matrix}. The main drawback of such quantum algorithms is that the operator needed for ITE is not unitary, which makes it unimplementable on a quantum device without further treatment like block encoding or approximating the effects of the non-unitary operations by unitaries \cite{love2020cooling,low2019hamiltonian,gilyen2019quantum,rrapaj2025exact,simon2025ladder,mcardle2019variational,motta2020determining}.

There are two main approaches to employing ITE on a quantum computer. The first is the Quantum Imaginary Time Evolution (QITE) algorithm~\cite{mcardle2019variational,motta2020determining,hejazi2024adiabatic,
gomes2020efficient}, which reproduces the non-unitary effect of the ITE operator by breaking down the evolution into small steps; \OA{at each timestep, a unitary that approximates the normalized state after an ITE step can be found by solving a least squares problem.}
\OA{However, the computational cost quickly grows with each step as the correlation length of the state increases due to the need for more complicated (non-local) unitaries to approximate the ITE process}.

The second approach is to embed the non-unitary operator inside a larger unitary matrix, then recover the effects of the non-unitary operation by post-selection. This is known as block-encoding \cite{low2019hamiltonian,gilyen2019quantum,leadbeater2024non,rrapaj2025exact,simon2025ladder}, and it is our method of choice in this paper.  
We employ a block-encoding scheme --- based on a neural network architecture known as restricted Boltzmann machine (RBM) --- that implements the non-unitary action on the physical (visible) qubits as part of a larger unitary operator by expanding the system with additional ancillary (hidden) qubits~\cite{rrapaj2025exact} (see \cref{fig:figure1}). In the original algorithm, operators of the form $e^{-\kappa \sigma}$, where $\sigma$ is a Pauli string, are expressed as a sum of unitary operators, which can be realized as a quantum circuit with an ancillary qubit. The operation is then accepted with some acceptance probability depending on the outcome of the measurement on the ancilla.

Similar to standard unitary synthesis, the goal now becomes finding a way to compile a quantum circuit for $e^{-\beta H}$, where $H$ is a given Pauli sentence. Common techniques rely on series expansions~\cite{Berry2015} or product formulas~\cite{lloyd1996universal,haah2021quantum,Childs2021theory}, but they require increasing circuit depth for low temperatures (i.e., long evolution times with large $\beta = 1/T$), which is especially problematic in the Noisy Intermediate Scale Quantum (NISQ) and early fault tolerant eras. Furthermore, in addition to decoherence issues and lack of practical error-correction protocols, having more exponents $e^{-\kappa \sigma}$ to implement equates to having to do more post-selection, exponentially diminishing the total acceptance probability of the process.

To circumvent these issues, we introduce the CaRBM algorithm for thermal state preparation. We implement the ITE operator via the Cartan decomposition, which uses the Lie algebraic properties of the Hamiltonian to compile circuits with depths that are independent of simulation time~\cite{kokcu2022fixed,alsheikh2025redcard}. The Hamiltonian gets ``rotated" into an Abelian algebra (i.e., a set of commuting Pauli strings) that leads to an exact product formula of exponents of Pauli strings that can then be readily implemented via the RBM encoding. Using a fixed-depth architecture for post-selection techniques has the added benefit of fixing the total number of post-selection processes one needs to perform, ensuring a better total success rate.

We improve upon previous implementations of the RBM encoding by demonstrating that the first few layers --- on the order of the number of qubits ---  of ITE can be modified to ensure a perfect success probability, at the expense of one controlled Pauli-string operation per layer (see \cref{fig:figure1}).  This further motivates the use of a fixed-depth architecture, ensuring maximum gain from the correction process.

The result is a quantum circuit that generates any finite-temperature state with a fixed-depth cost with respect to the temperature. Our algorithm uses an infinite-temperature Thermo Field Double (TFD) state followed by ITE to generate a thermal state at any temperature, requiring no minimization procedure and hence no entropy calculations~\cite{feiguin2005finite,feiguin2005time}. CaRBM further improves upon existing non-variational imaginary-time purification techniques that are based on Suzuki-Trotter decomposition by providing a fixed-depth implementation of the ITE operator that also employs partial error correction.

The layout of the paper is as follows: In Sec.~\ref{sec:methods}, we outline the Cartan decomposition, the RBM encoding
and the correction scheme. In Sec.~\ref{sec:examples}, we demonstrate our algorithm by (1) calculating the partition function zeros of the $\rm XXZ$ model and (2) probing the finite density and temperature phase diagram of a model of strongly interacting relativistic fermions. We conclude with a summary of our work in Sec.~\ref{sec:summary}.

\vspace{-2mm}
\section{Methods}\label{sec:methods}
In this section, we outline the building blocks of the CaRBM algorithm.
\OA{First, we discuss how we use the Cartan decomposition to obtain a fixed-depth product formula. We then explain how we encode the propagator using the RBM architecture, which allows us to perform the ITE as a sequence of unitary operations.}
Finally, we explain our correction scheme, which drastically improves the sampling rate, allowing us to probe temperature regions that were not accessible with the uncorrected algorithm.

\vspace{-4mm}
\subsection{Cartan decomposition}

We want to express the imaginary propagator $e^{-\beta H}$, where $H=\sum_i H_i \sigma^i$, and $H_i$ are real coefficients and $\sigma^i$ are Pauli strings, as a product of exponentials of single Pauli strings. \OA{Every exponent in the product is going to be separately encoded into a unitary that performs the desired non-unitary operation with some success probability. Naturally, the more exponents one has to encode, the smaller the total success rate becomes. Because of this, widely used approximation protocols like the Suzuki-Trotter approximation can be problematic, especially for long simulation times (or equivalently, low temperatures). This motivates us to employ a protocol that results in circuits whose depth does not increase as simulation time grows. Our method to express the propagator in the desired form is the Cartan decomposition, which uses the Lie algebraic structure of the Hamiltonian to express it as}

 \begin{align}
     H = K h K^\dagger,
 \end{align}
 where $h$ is an element in an Abelian algebra called a Cartan subalgebra, i.e., it is a linear combination of commuting Pauli strings, and
 \begin{align}
     K = \prod_j e^{i\theta_jk^j},
 \end{align}
 where $k^j$ are Pauli strings and $\theta_j$ are real parameters \OA{which can be determined by finding a local extremum of an effective cost function \cite{earp2005constructive,kokcu2022fixed,alsheikh2025redcard}.} 

 The propagator becomes
 \begin{align}
     e^{-\beta H} = K e^{-\beta h} K^\dagger.
 \end{align}
 Since the terms in $h$ commute, we can break up the exponent into a product of exponentials of single Pauli strings without incurring any Trotter error. We then proceed to implement the ITE via RBM for each term individually.

\vspace{-2mm}
\subsection{Finite-temperature block encoding}
\label{sec:block_enc}

Let $\sigma_{(J)}$ be a Pauli string of length $M$ that acts on the subset of qubits $J=\{j_1,\ldots,j_M$\}. The exponential of any such string can be converted to an exponential of single-qubit Pauli operators,
\begin{equation}
    \exp\left[- \kappa \sigma_{(J)}\right]=U \exp\left[-\kappa Z_{(J)}\right]  U^{\dagger},
\end{equation}
where $Z_{(J)}\equiv\sum_{j\in J}Z_j$, and the unitary $U$ is a product of single-qubit gates that transform from the $Z$ basis to the $X$ or $Y$ basis (e.g., Hadamards) and two-qubit entangling gates (e.g., CX's). In short, a Pauli string of arbitrary complexity is equivalent (modulo Hadamards, CX's, etc.) to a single-qubit Pauli string in the corresponding computational basis. This result greatly simplifies our block-encoding scheme; rather than constructing a unitary encoding of the imaginary-time propagator of arbitrary Pauli strings, we only need to find a unitary encoding of the imaginary-time propagator of single-qubit Pauli operators. 

Introducing an ancillary qubit $a$ initialized in the $\ket{0}$ state, the imaginary-time propagator acting on physical qubit $j$  in initial state $\ket{\psi}$ can be expressed as \cite{rrapaj2025exact}
\begin{align}
e^{-\kappa \sigma_j}\ket{\psi} &= A\sum_{\tilde{h}=\pm 1} e^{-i(W\sigma_j + bI)\tilde{h}}|\psi\rangle \nonumber \\
&= 2A\bra{0}_a e^{-i(W \sigma_j+ b I )\otimes X_a} \ket{\psi}\otimes\ket{0}_a,
\label{eq:2body_id}
\end{align}
where $\sigma_j\in \{I_j,X_j,Y_j,Z_j\}$. The interaction between the physical qubit and the ancilla, including the marginalization over the latter, represents a neural network architecture known as a restricted Boltzmann machine (RBM). The RBM parameters $W$ and $b$ are real-valued, and therefore the operator on the total system (physical plus ancillary) is unitary. Post-selection is successful if the ancillary qubit remains in the initial state upon measurement. (An equivalent version of Eq. \eqref{eq:2body_id} is obtained if the ancillary qubit is initialized and post-selected in state $\ket{1}$.) There exist infinitely many encoding functions $W(\kappa)$, $b(\kappa)$ that produce the same imaginary-time propagator up to an overall normalization $A(\kappa)$. The specific parameterizations that we employ in our numerical calculations are provided in \cref{app:RBMparameters}. 
 
 When acting on a generic state, one drawback of block encoding is its probabilistic nature: In the event of a post-selection failure --- when the action of the RBM operator flips the ancilla bit --- the entire circuit must be restarted, and the sample collected is not useful. Different choices of encoding functions $W(\kappa)$, $b(\kappa)$ lead to different post-selection success probabilities depending on the initial state $\ket{\psi}$ of the physical qubit (see App. \ref{app:RBMparameters} for details). We can, however, correct post-selection for the first several layers of imaginary-time propagators in our circuit, resorting to probabilistic encoding for subsequent layers.

\vspace{-4mm}
 \subsection{Correction scheme for the improving sampling rate}
 In general, the post-selection success rate decays exponentially in the coupling $\pm\kappa$, with the sign depending on the choice of RBM parameters and the initial state $\ket{\psi}$. Since the failed results are discarded, an exponential number of resources are required in order to accurately implement the ITE. As mentioned earlier, it is possible to use some of the failed results to guarantee a perfect success rate regardless of the measurement on the ancilla. This is made possible by a specific choice of the parameters $W, b$ that implements $e^{\kappa \sigma}|\psi\rangle$ instead of $e^{-\kappa\sigma}|\psi\rangle$ when the post-selection fails (see \cref{app:RBMparameters}).

Let us see how this works. Pick a unitary operator $O$ such that $\{O,\sigma\}=0$ and $O|\psi\rangle \sim |\psi\rangle$. Then $Oe^{\kappa\sigma}|\psi\rangle \sim e^{-\kappa\sigma}|\psi\rangle$. To implement this, we can apply a controlled $O$ gate on the post-selection ancilla qubit. If the readout is 0, then ITE succeeded and we do not need to correct for anything. If the readout is 1, then $O$ is applied and we get the required ITE. For one ITE operation, this simple procedure ensures a 100\% success probability by adding one controlled operation to the circuit. Considering the fact that the success probability decreases exponentially with $|\kappa|$, this is a significant improvement on quantum resources. For implementation purposes, we restrict $O$ to a Clifford operation, i.e., we choose $O$ to be a Pauli string.

If we want to correct more than one layer, we can follow a similar logic. If we have 4 layers, for example, and we want to fix the leftmost layer, we would want to pick $O$ such that it switches the sign of this layer while preserving the sign of the first three. We state this observation as the following theorem, and refer the reader to \cref{app:corrections} for its proof and more details.
\begin{theorem}
     For the ITE process $e^{-\kappa_j\sigma^j}e^{-\kappa_{j-1}\sigma^{j-1}}\dots e^{-\kappa_1\sigma^1}|\psi\rangle$, we can find an operator $O$ that anticommutes with $\sigma^j$ but commutes with all the other layers, provided that $\sigma^j$ cannot be written as a product of the other Pauli strings. The $j$th layer can thus be corrected provided that the operator $O$ that fixes the layer satisfies $O|\psi\rangle \sim |\psi\rangle$.
\end{theorem}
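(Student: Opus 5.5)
The plan is to work in the symplectic (binary vector) representation of the $n$-qubit Pauli group modulo phases. Each Pauli string $\sigma$ becomes a vector $v_\sigma\in\mathbb{F}_2^{2n}$. Two strings commute or anticommute according to the symplectic form $\omega(v_\sigma,v_\tau)=v_\sigma^T\Lambda v_\tau \in\{0,1\}$, with $\Lambda=\begin{pmatrix}0&I\\ I&0\end{pmatrix}$. Products of Pauli strings correspond, up to phase, to sums of vectors. So the hypothesis that ``$\sigma^j$ cannot be written as a product of the other Pauli strings'' says exactly that $v_{\sigma^j}\notin \mathrm{span}\{v_{\sigma^1},\dots,v_{\sigma^{j-1}}\}$.

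The first step is to reduce to a linearly independent set. Pick a maximal linearly independent subset $B\subseteq\{v_{\sigma^1},\dots,v_{\sigma^{j-1}}\}$. By hypothesis, $B\cup\{v_{\sigma^j}\}$ is still linearly independent.

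The second step is to solve the linear system
\begin{align}
\omega(u,v_{\sigma^j})=1,\qquad \omega(u,b)=0\ \ \forall b\in B.
\end{align}
The map $u\mapsto \Lambda u$ is invertible, because $\omega$ is nondegenerate. Hence the linear functionals $\omega(\cdot,w)$ for $w$ in an independent set are themselves independent, and the system has a solution $u$. This is the step that uses the hypothesis, and it is the only genuine obstacle; the nondegeneracy of $\omega$ settles it.

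Since every other $v_{\sigma^i}$ is a sum of elements of $B$, bilinearity gives $\omega(u,v_{\sigma^i})=0$ for all $i<j$. Let $O$ be the Pauli string corresponding to $u$, with any phase. Then $O$ is unitary, anticommutes with $\sigma^j$, and commutes with $\sigma^1,\dots,\sigma^{j-1}$.

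The third step is the correction argument. When the post-selection on the $j$th ancilla fails, the choice of RBM parameters in \cref{app:RBMparameters} produces
\begin{align}
e^{\kappa_j\sigma^j}e^{-\kappa_{j-1}\sigma^{j-1}}\cdots e^{-\kappa_1\sigma^1}\ket{\psi}.
\end{align}
We apply $O$, controlled on that ancilla reading $1$. Because $O$ anticommutes with $\sigma^j$, we have $Oe^{\kappa_j\sigma^j}=e^{-\kappa_j\sigma^j}O$. Because $O$ commutes with the earlier layers, it passes through each $e^{-\kappa_i\sigma^i}$ unchanged. Finally, $O$ reaches $\ket{\psi}$, where $O\ket{\psi}\sim\ket{\psi}$ contributes only a global phase. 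The branch with readout $1$ therefore yields the same normalized state as the successful branch, so the $j$th layer is corrected deterministically.

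One subtlety remains. The earlier layers may themselves have been corrected, so the state entering layer $j$ may already carry their controlled-$O$ operations. I would handle this by noting that the earlier corrections have already restored the intended propagators, up to phase, on every branch. The input to layer $j$ is therefore always the ideal state $e^{-\kappa_{j-1}\sigma^{j-1}}\cdots\ket{\psi}$, and the argument above applies unchanged.
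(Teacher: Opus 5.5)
Your proposal is correct and follows essentially the same route as the paper: you write the commutation constraints as a linear system over $\mathbb{F}_2$ in the binary symplectic representation, observe that it is solvable exactly when $\sigma^j$ is not a product of the earlier strings, and then commute the controlled-$O$ through the earlier layers, as in the paper's multilayer-correction lemma. Your appeal to the nondegeneracy of the symplectic form, with the reduction to a maximal independent subset, makes the paper's brief consistency argument explicit, and your remark about earlier corrected layers fills in a point the paper leaves implicit.
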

The theorem relies on two conditions. The first one, that is, the commutation properties of $O$ with the $\sigma$'s, puts an upper bound on the total number of layers that can be corrected. We show in \cref{app:corrections} that this number cannot exceed the number of physical qubits. Nevertheless, if it exists, we can systematically find such $O$ with ease. The second condition, that is $O|\psi\rangle \sim |\psi\rangle$, is in general more difficult to satisfy for a generic $|\psi\rangle$. However, since we are starting with the fully mixed state to prepare thermal states, this condition is going to be trivially satisfied, since the fully mixed state is proportional to the identity matrix, and thus it commutes with any operator.

\vspace{-4mm}
\section{Applications}\label{sec:examples}
\vspace{-3mm}
In this section, we use our algorithm to calculate thermal observables that are of interest to condensed matter and high-energy physics; specifically, locating partition function zeros and mapping finite-temperature and -density phase diagrams of relativistic fermions. Each of these cases involves a variation of the CaRBM algorithm, which are illustrated in Fig.~\ref{fig:circuits}.  While they all start with an infinite temperature TFD state generated by the circuit in panel (a), the subsequent pieces are slightly different. Panel (b) depicts the circuit used to find the zeros of the partition function in the complex plane when the system is probed with a complex external field (these are called Lee-Yang zeros). By coupling the system with an additional ancilla and time evolving the coupled system, the partition function can be found by measuring the ancilla. If we complexify the temperature instead of adding a complex probe field, we get Fisher zeros through the circuit shown in panel (c) Note that in this case the circuit simplifies and the entirety of $K$ can be avoided. Finally, we can simply prepare a thermal state and measure the physical qubits to find thermal averages as shown in panel (d).

\begin{figure}[t]
    \centering
    \includegraphics[clip=true, trim=0 0 0 0, width=0.72\columnwidth]{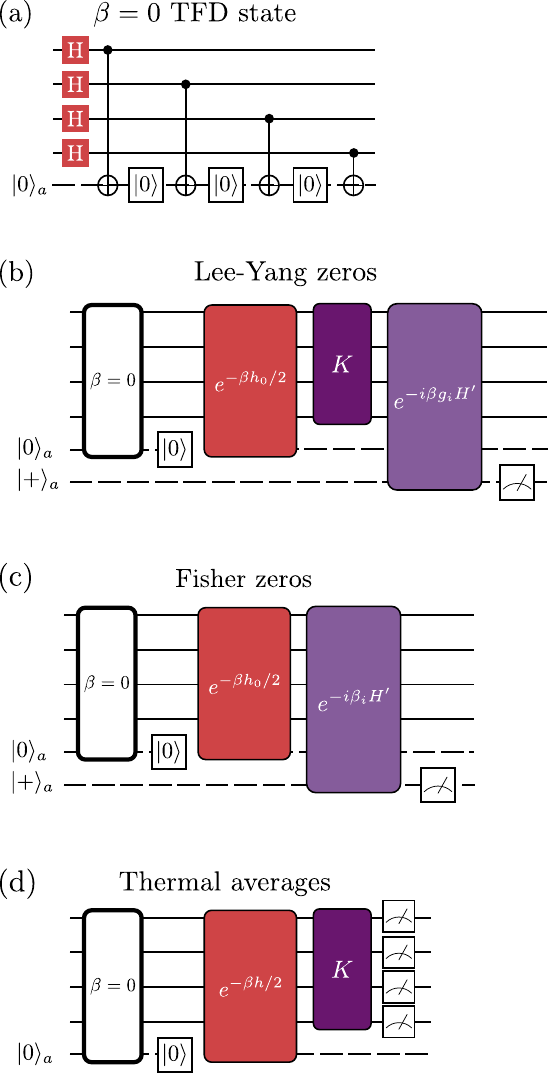}
    \caption{Schematics for the different circuits used for (a) infinite-temperature thermal state preparation, (b) Lee-Yang zeros, (c) Fisher zeros, and (d) thermal averages.}
    \label{fig:circuits}
\end{figure}

\vspace{-4mm}
\subsection{Partition function zeros}
\vspace{-2mm}
We start by demonstrating that our algorithm can be used to calculate partition function zeros, which have become one of the cornerstone observables in thermal systems due to recent developments allowing their measurement in experimental systems and the close relationship they exhibit to dynamical and regular phase transitions~\cite{yang1952statistical,wei2012lee,wei2014phase,liu2024exact,liu2023signatures,francis2021many}.
There are two different versions of the partition function zeros that we will be interested in: The first is the construction of Lee and Yang, where instead of complexifying the temperature, we couple our system to a complexified probe field. The second is the original definition due to Fisher, where the temperature is allowed to be complex.

\begin{figure*}[t]
    \centering
    \includegraphics[width=0.95\textwidth]{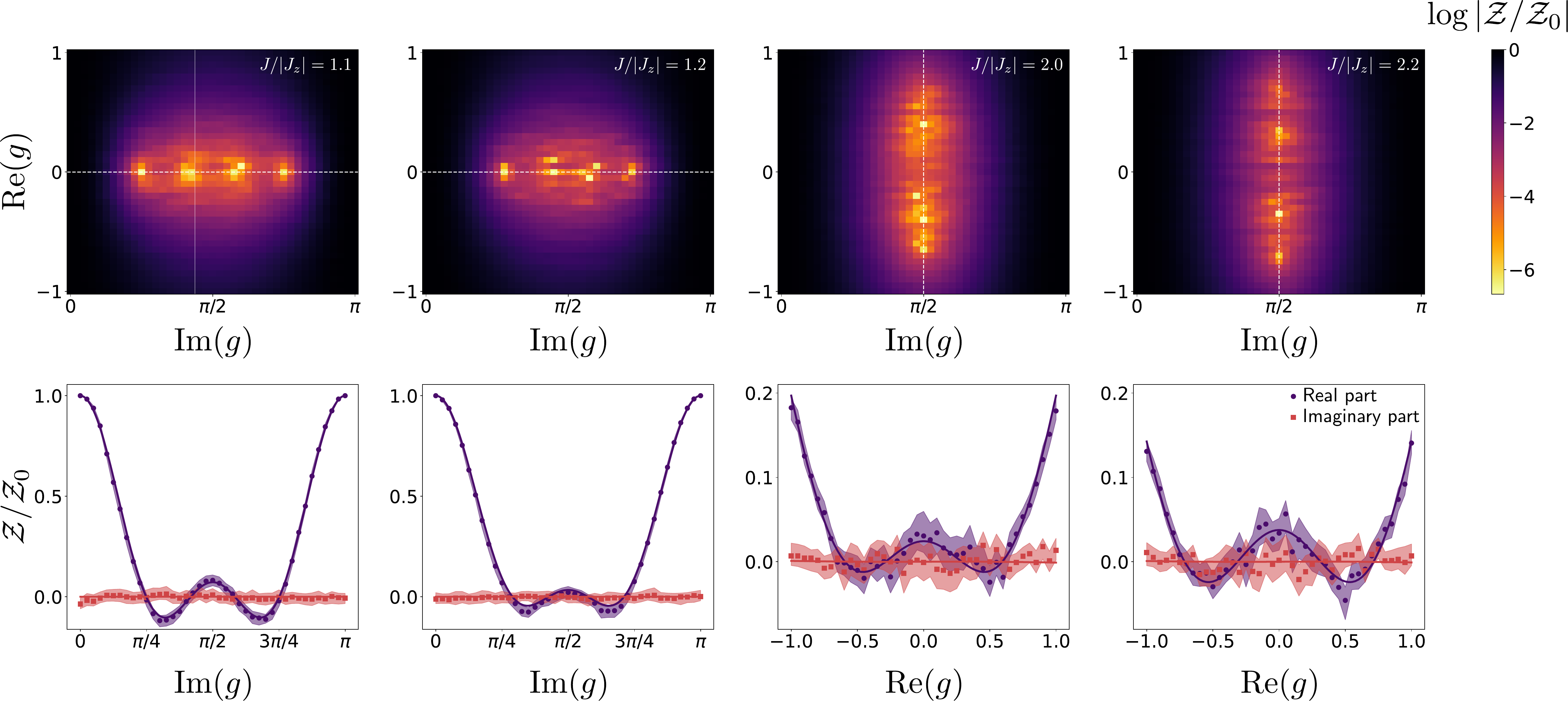}
    \caption{Top: Circuit simulations of partition function plots showing the locations of Lee-Yang Zeros for the 4-site XXZ model coupled to a complexified probe field. The behavior of the zeros undergoes a considerable change as we increase the value of the ratio $J/|J_z|$, signifying a phase transition from the Ising-like phase to the XY-like phase. Bottom: Line plots of sections taken on the heat maps. The first two are taken at $\mathrm{Re}(g) = 0$ and the second two are taken at $\mathrm{Im}(g) = \pi/2$. The sections are chosen to pass through the zeros. The markers are the circuit simulation results, and the shaded regions are confidence intervals at a confidence level of 0.95, estimated via bootstrap. The solid lines are MPS calculations. All simulations are at $\beta = 1$.}
    \label{fig:LYzeros}
\end{figure*}

In their seminal work, Lee and Yang investigated the partition function of an Ising model in the presence of a complexified magnetic field and showed that the locations of the zeros change with respect to temperature, and at critical temperatures, they pinch the real axis showing that the location of zeros could be used to identify criticality in quantum systems \cite{yang1952statistical}. Lee-Yang zeros have not been treated as a real observable since they only occur at imaginary magnetic field or temperature until Wei and Liu proposed a method to experimentally measure Lee-Yang zeros using the decoherence of a probe spin coupled to the Ising system \cite{wei2012lee}. Here, we will use the same approach of coupling the system to a probe spin and measuring its coherence to calculate the partition function zeros.

The first step in calculating the Lee-Yang zeros is to introduce a complexified probe field, $H_B = g\sum_{i}^{L}Z_i = gH_I$, where $g=g_r+ig_i$. Given the system Hamiltonian $H_s$, the partition function is
\begin{align}
    \mathcal{Z}(\beta,g) = \mathrm{Tr} \exp{\left(-\beta H_0 -i\beta g_iH_I\right)},
\end{align}
where $H_0 = H_s + \mathrm{Re}\ H_B$. In the case when $H_s$ and $H_I$ commute, we can compute the partition function by coupling the system to an ancilla with the coupling Hamiltonian
\begin{align}\label{eq:int_H}
    H' = \frac{1}{2} H_I \otimes \sigma_i^z.
\end{align}
We then prepare the initial state
\begin{align}\label{eq:initial_state}
    \rho(0) = \frac{e^{-\beta H_0}}{\mathcal{Z}_0(\beta,g_r)} \otimes |+\rangle \langle +|,
\end{align}
where $\mathcal{Z}_0(\beta,g_r)$ is the partition function of $H_0$. The system is then time-evolved under $U(\beta g_i) = \exp \left(-i\beta g_i H' \right)$. The physical system's qubits are traced out, and a measurement of the off-diagonal element of the resulting density matrix yields $\mathcal{Z}(\beta,g)/ \mathcal{Z}_0(\beta,g_r)$. For a general $H_I$, a more complex $H'$ is needed \cite{wei2014phase}. 

We use CaRBM to prepare the state given by \cref{eq:initial_state}. We perform the KHK decomposition on $H_0$ to arrive at $H_0 = K_0h_0K_0^\dagger$.
We rewrite \cref{eq:initial_state} as
\begin{align}
    \rho(0) = K_0 \frac{e^{-\beta h_0}}{\mathcal{Z}_0(\beta,g_r)} K_0^\dagger\otimes |+\rangle \langle +|.
\end{align}
We obtain the thermal state at inverse temperature $\beta$ for $h_0$ via imaginary time evolution starting from the infinite-temperature thermal state, i.e., the normalized identity matrix \OA{A circuit construction for the fully mixed state is shown in \cref{fig:circuits}(a)}.

After the desired thermal state at a certain inverse temperature $\beta$ is obtained, we act on the state by the unitary $K_0$. We then evolve this state in real-time under $U(\beta g_i)$ \OA{\cref{fig:circuits}(b) illustrates the circuit used to obtain the Lee-Yang zeros}.

Following the procedure described above we calculate the Lee-Yang zeros of the $\rm XXZ$ model at $\beta=1$. The system Hamiltonian of size $L$ is given by
\begin{align}
    H_s = -J\sum_{i=1}^{L-1} \left(X_i X_{i+1} + Y_i Y_{i+1}\right) - J_z\sum_{i=1}^{L-1} Z_i Z_{i+1}.
\end{align}

\begin{figure*}[t]
    \centering
    \includegraphics[width=0.95\textwidth]{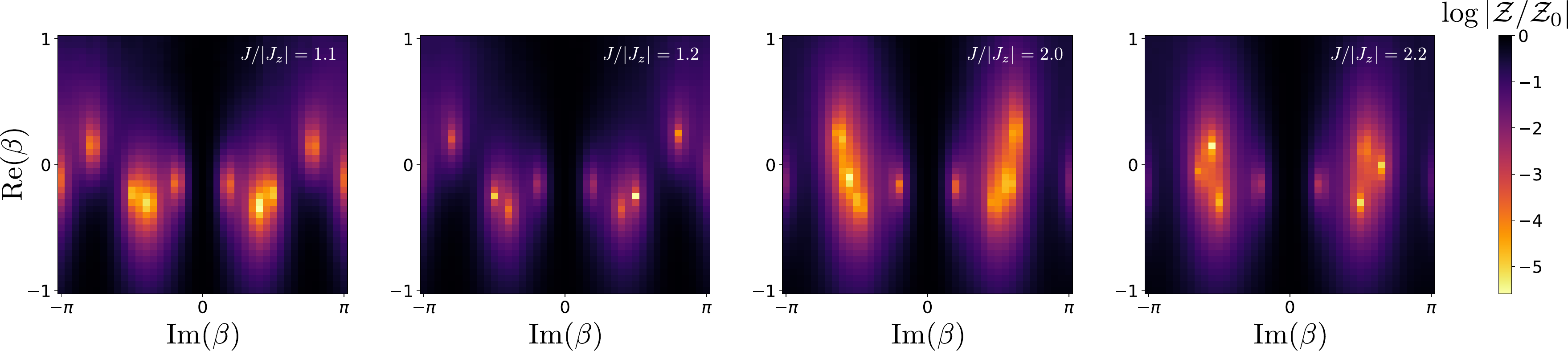}
    \caption{Complex partition function showing the locations of Fisher Zeros for 4-site XXZ model obtained via simulations of the circuits discussed in the main text.}
    \label{fig:Fzeros}
\end{figure*}

We work within the ferromagnetic Ising regime $J_z > 0$, and we probe the partition function zeros between an $\rm XY$-like regime ($|J_z| \ll |J|$) and an Ising-like regime ($|J_z| \gg |J|$). We use the complex probe field
\begin{align}
    H_B = g\sum_{i=1}^L Z_i,
\end{align}
which commutes with $H_s$. A choice for the Cartan subalgebra is
\begin{widetext}
\begin{align}\label{eq:CSA}
    \mathfrak{h} = \mathrm{span}_{i\mathbb{R}}\{Z_{j_1}Z_{j_2}\dots Z_{j_r} \ | \ 1\leq r \leq L, \ 1\leq j_1 < \dots < j_r \leq L \ \backslash \ Z_1\dots Z_L \},
\end{align}
\end{widetext}
\OA{where $\mathrm{span}_{i\mathbb{R}}$ denotes the span of $i$ times the Pauli strings over the real numbers. \cref{eq:CSA} contains all products of $Z$'s except $Z_1\dots Z_L$.} Our Hamiltonian $H_0$ is transformed into $h_0$ such that $ih_0 \in \mathfrak{h}$.

Our results in \cref{fig:LYzeros} show the expected distinct behavior of the Lee-Yang zeros in the two phases of the $\rm XXZ$ model: In the Ising-like regime, the Lee-Yang zeros lie on the horizontal $g_r=0$ line. After crossing the phase transition to the XY-like regime, the zeros switch to being on a vertical $g_i=\pi/2$ line \cite{francis2021many}.

\OA{In the bottom row, we plot the real and imaginary parts of the partition function at $g_r=0$ for the Ising regime and $g_i=\pi/2$ for the XY regime. The plots show a confidence level of 0.95, estimated via bootstrap. Note that while the XY regime plots may look noisier, it is a result of the different scale. In fact, the Ising regime plots are noisier on average.}

To probe the Fisher zeros, we complexify the inverse temperature $\beta = \beta_r + i\beta_i$. This is equivalent to taking $H_I = H_0$, which means that \cref{eq:int_H} always applies. We evolve the thermal state under the unitary $U(\beta_i) = \exp \left(-i\beta_i H'\right)$. Due to $H_I=H_0$, the $K$ unitary completely drops due to the cyclic property of the trace, leaving us with a shallower circuit, depicted in \OA{\cref{fig:circuits}(c)}. In \cref{fig:Fzeros} we again plot the logarithm of the absolute value of the partition function. We see that Fisher zeros also show some dependence on the ratio $J/|J_z|$. However, it is more difficult to spot a qualitative change as we move across the two phases as was in the Lee Yang case. \OA{Nevertheless, we can still use the Fisher zeros to reconstruct the partition function. The implementation of Fisher zeros is also sometimes simpler, particularly because a Cartan decomposition for $H_0=H_s$ can be cheaper than a Cartan decomposition for the system with a probe field $H_0=H_s+\mathrm{Re}\ H_B$. It can also result in a smaller Cartan subalgebra, giving higher success rates.}

\subsection{Thermal phase diagram of relativistic fermions in $1+1d$ }
In general, any finite-density simulation of an interacting fermionic system will exhibit a sign problem, which makes the use of Monte Carlo sampling techniques very hard to implement. A natural solution for this is to use quantum computers where the sign problem is not an issue. However, there is still the question of the thermal state preparation which needs to be addressed.  As a second example of the CaRBM algorithm, we will reproduce the phase diagram of a strongly interacting relativistic fermionic model 
called the Gross-Neveu model \cite{gross1974dynamical} at finite temperature and density with open boundary conditions.

The Gross-Neveu model, due to its asymptotic freedom and spontaneous chiral symmetry breaking, has attracted a great deal of attention in recent years as a toy model for simulating relativistic interacting fermions on quantum computers \cite{roose2021lattice,asaduzzaman2022quantum,bakalov2026quantum,akiyama2023matrix,kong2026phase}.
 The continuum Hamiltonian of the Gross-Neveu model is the massive Dirac equation 
 \begin{equation}
    H_0=\int dx\; i\bar{\psi} \alpha\partial_x \psi+m \bar{\psi}\beta\psi \label{Dirac_H},
\end{equation}
plus a $4$-fermion interaction term. 
\begin{equation}
    H_G = \int dx\; \frac{-G^2}{2}(\bar{\psi}\psi)^2.
\end{equation}
 For discretization, we use the staggered fermion formalism \cite{kogut1975hamiltonian,susskind1977lattice} and fix  $\alpha = X$, $\beta = Z$. 
 After staggerization and Jordan-Wigner transformation, the corresponding Hamiltonian in terms of spin operators is given as, 
\begin{align} \label{qubit_ham}
H^{(N)}_{0,m}=&\sum_{a=1}^{N} \frac{1}{2}\bigg[ i\sum_{n=1}^{L-1} \Big(S_n^+(a)S_{n+1}^-(a)-S_n^-(a)S_{n+1}^+(a)\Big)\nonumber\\
&+m\sum_{n=1}^{L}\left(-1\right)^n\left(S_n^-(a)S_n^+(a)\right) \bigg]+{\rm h.c} \nonumber\\
=& \sum_{a=1}^{N} \Bigg[ \sum_{n=1}^{L-1} \Big( -X_n(a)Y_{n+1}(a)+Y_n(a)X_{n+1}(a)\Big) \nonumber\\
&+m \sum_{n=1}^{L}\left(-1\right)^n\left(1-Z_n(a)\right) \Bigg],
\end{align}
where $S^\pm = \frac{1}{2}\left(X \pm iY\right)$, and the four-fermion interaction term is
\begin{equation}
    H^{(N)}_{G}=-\frac{1}{2}G^2\sum_{n=1}^{L} \,\, \sum_{a=1}^{N} \,\, \sum_{b,b>a} (I-Z_n(a))(I-Z_n(b)).
\end{equation}
Here, $N$ denotes the number of flavors and $L$ the number of lattice sites. To probe the phase diagram at finite density, we add a chemical potential term 
\begin{equation}
    H_{\mu} =  \int dx\; \mu \bar{\psi}\gamma\psi .
\end{equation}
Since we already fixed $\alpha$ and $\beta$ only choice remaining for the $\gamma$, which anticommutes with $\alpha,\beta$ and squares to the identity, is $\gamma=Y$.  This leads to the  following chemical potential term after the Jordan-Wigner transformation takes the following form 
\begin{equation}
    H_{\mu}^{(N)} =\mu \sum_{a=1}^{N} \sum_{n=1}^{L-1}i(-1)^n (S_n^+(a)S_{n+1}^-(a)-S_n^-(a)S_{n+1}^+(a))
\end{equation}

We then set $m=0$ to simplify the Gross-Neveu Hamiltonian and focus only on the chemical potential term. Our Hamiltonian of interest is given as, 
\begin{equation}
    H_{GN}^{(N)} = H_{0}^{(N)} + H_G^{(N)} + H_{\mu}^{(N)}
\end{equation}
Further details regarding the discretization of the model and details about the chemical potential term can be found in \cref{sec:GN}.

In our simulations, we set $N=2$ and $L=4$
and calculate the two-point correlation function $\sum_i\langle \bar{\psi}_i\psi_0\rangle \sim \sum_i \langle Z_iZ_0\rangle$. Again, we start with the infinite temperature state and obtain the desired state via imaginary time evolution using $e^{-\beta H_{GN}}$. 
We find that the phase diagram obtained with our method which can be seen in Fig.~\ref{fig:phase_diag_GN} is very similar to the known results, even for small lattice sizes and number of flavors~\cite{wolff1985phase,lenz2020inhomogeneous,lenz2022inhomogeneities}. As expected, we see a symmetry-broken phase with a finite condensate for large $\beta$ and small $\mu$ indicated as the bright region in the graph, and a symmetric phase for large $\mu$ indicated as the black region.  However, we don't see a sign of an inhomogeneous phase due to the small lattice volumes and number of flavors. The effect of the correction scheme is also apparent in these phase diagram plots. In the top left plot, we clearly see an unphysical phase on the top right. After implementing the correction scheme, we see that this region disappears, and we obtain the expected phase diagram. 

The effect of the correction scheme becomes more apparent when we look into the success rate of the post-selection process these plots can be seen in the bottom row of Fig.~\ref{fig:phase_diag_GN}, where the black color which represents a very low probability of success covers a much bigger portion of the phase space before the correction and is much more confined to top right corner of the phase space.

\begin{figure}[t]
    \centering
    \includegraphics[width=\columnwidth]{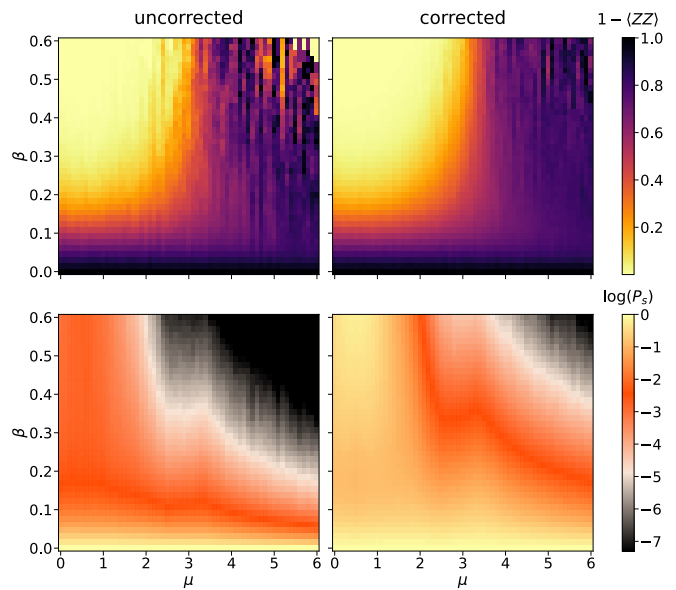}
    \caption{Top: Circuit simulations $\sum_i \langle Z_iZ_0\rangle$ for the 2-site Gross-Neveu model with 2 fermion flavors across a range of temperatures $\beta$ and chemical potentials $\mu$. Simulations without and with correction layers are shown for comparison. Bottom: Success rate of the post-selection process. The correction procedure allows us to explore an extended region in the phase space that was otherwise inaccessible (the blackened region) due to vanishing success probabilities.
    }
    \label{fig:phase_diag_GN}
\end{figure}

This shows that our algorithm can be successfully implemented to obtain the phase diagrams of relativistic fermions where Monte Carlo approaches fail due to the sign problems. 

\section{Conclusion and Outlook}
\label{sec:summary}
In this paper, we introduced the CaRBM quantum algorithm for fixed-depth thermal state preparation. We use the Cartan decomposition to bring the Hamiltonian into an Abelian basis, which we use to find a product formula for  the ITE propagator $e^{-\beta H}$. The individual propagators obtained are then fed into the RBM block encoding machinery to probabilistically encode the ITE as a sequence of unitaries. To prepare a thermal state at an inverse temperature $\beta$, we start with the fully mixed state then evolve it under $e^{-\beta H/2}$ using CaRBM.
The combination of these two methods allows us to partially correct some layers, considerably improving the total success rate of the post-selection process. This allows us access to lower temperature regimes.

To demonstrate our algorithm, we ran quantum circuit simulations to find the partition function zeros of the $\rm{XXZ}$ model, which are in excellent agreement with the tensor network results.
 As a second example, we calculated the phase diagram of the Gross-Neveu model, which is a model of strongly interacting relativistic fermions at finite temperature and density. We showed that our algorithm reproduces the expected phase diagram. Finite density calculations of strongly interacting systems are of interest to many high-energy and nuclear physics problems, but the classical approaches to simulating these systems usually lead to a sign problem, hindering any Monte-Carlo approach to simulate such systems. We showed that our algorithm can capture the phase diagram of the Gross-Neveu model successfully and can be used for further studies of such models which is of high interest to high-energy and nuclear physics communities.

\OA{The Cartan decomposition greatly complements the RBM encoding in another aspect as well. We observe that the coefficients of the Pauli strings that appear in the Abelian element $h$ are almost always skewed, i.e., a large portion of the norm of $h$ is concentrated in a handful of coefficients, even if the strings that appear in the original $H$ have the same coefficients. Since we can only correct the first few layers, choosing the layers that have the largest absolute coefficients to correct becomes more impactful, because the following post-selection on the remaining ITE processes have small coefficients in the exponent.}

\OA{One drawback of this type of algorithm is the large overhead of finding the Cartan decomposition. The problem entails a classical optimization problem with dimension that increases exponentially with system size \cite{alsheikh2025redcard}, unless the system is fast-forwardable \cite{khaneja2005constructive,earp2005constructive,drury2008quantum_shannon,daugli2008general}. The algorithm also suffers from diminishing acceptance ratios as $\beta$ grows, which is a general feature of block-encoding algorithms. Other techniques (e.g. Ref.~\onlinecite{sambasivam2025tepid}) can then be used for lower temperatures in conjunction with CaRBM, which performs very well at high temperature regimes.}

\OA{Moving forward, we note that the post-selection could be improved by choosing a different RBM block encoding, or by optimizing the correction procedure. 
The correction procedures may also be adapted to work in circuit structures whose depth varies with the problem.
} 

\vspace{0.2in}

\section*{Acknowledgments}
AFK acknowledges financial support from the U.S. Department of Energy, Office of Science, Advanced Scientific Computing
Research under DE-SC0025623 and GCT and OA from the National Science Foundation under award No. PHY-2325080: PIF: Software-Tailored Architecture for Quantum Co-Design. ER is supported by the U.S. Department of Energy (DOE) under Contract No. DE-AC02-05CH11231, through the National Energy Research Scientific Computing Center (NERSC), an Office of Science User Facility located at Lawrence Berkeley National Laboratory. EJR was supported by the National Science Foundation under cooperative agreement 2020275 during partial completion of this work. EJR is supported by the U.S. Department of Energy through the Los Alamos National Laboratory. The research presented in this article was partially supported by the Laboratory Directed Research and Development program of Los Alamos National Laboratory under project numbers 20251163PRD3 and 20260043DR. Los Alamos National Laboratory is operated by Triad National Security, LLC, for the National Nuclear Security Administration of U.S. Department of Energy (Contract No. 89233218CNA000001).

\input{main.bbl}
\newpage
\appendix

\section{Block-encoding parameters}
\label{app:RBMparameters}
As discussed in Sec. \ref{sec:block_enc}, the unitary encoding of the imaginary-time propagator introduced in Eq. \eqref{eq:2body_id} is specified by parameters $(W,b)$ and normalization constant $A$. These parameters depend on the value of the imaginary-time parameter $\kappa$, but not on the Pauli matrix being encoded, $\sigma_j=X_j,Y_j,Z_j$. One valid choice of parameters is
\begin{equation}
    \begin{split}
        A&=\frac{1}{2}e^{|\kappa|},\\
        W&=\frac{1}{2}\cos^{-1}\left(e^{-2|\kappa|}\right),\\
        b&=sW,
    \end{split}
\end{equation}
where $s=\mathrm{sign}(\kappa)$. Properly implementing the imaginary-time propagator via Eq. \eqref{eq:2body_id} requires the ancillary qubit to be measured in the initial state after application of the block-encoded unitary. The probability of success depends on the initial state of qubit $l$, the value of the imaginary-time parameter $\kappa$, and the specific choice of encoding parameters $(W,b)$. To calculate it, we can expand the unitary evolution that appears on the right-hand side of \cref{eq:2body_id} as
\begin{align}
    e^{-i(W\sigma_j+bI)\otimes X_a} |\psi\rangle |0\rangle_a = &\cos{(W\sigma_j + b I)} |\psi \rangle \otimes |0\rangle \nonumber \\ 
    &- \mathrm{i}\sin{(W\sigma_j + b I)} |\psi \rangle \otimes |1\rangle.
\end{align}
The ITE is encoded in the first term of the sum, and the success probability is given by
\begin{align}
    P_s = \langle\cos^2{(W\sigma_j + b I)}\rangle = 1-\langle\sin^2(W\sigma_j+bI)\rangle.
\end{align}
For the above parameter choices, the probability of success is
\begin{equation}
    P_s=1-\left(1-e^{-4|\kappa|}\right)\alpha,
\end{equation}
where $\alpha$ is the probability of measuring $|\psi\rangle$ in the $\sigma_j=s$ state. In particular, if the initial state is such that $\alpha=0$, then the block-encoding is guaranteed to succeed. 

Alternatively, we can use the parameterization
\begin{align}\label{eq:correctionparam}
    A&=\sqrt{\frac{\cosh(2\kappa)}{2}}, \nonumber\\
    W&=\tan ^{-1}\left(e^{2 \kappa}\right) - \frac{\pi}{4} \nonumber\\ 
    b &=s\frac{\pi}{4},
\end{align}
with a probability of success
\begin{align}
    P_s = 1 - \frac{1}{1+e^{4s\kappa}} - \tanh(2s\kappa) \alpha,
\end{align}
where $\alpha$ is the probability of measuring $|\psi\rangle$ in the $\sigma_j=1$ state. On average, this choice of parametrization yields a lower success rate, particularly for $|\kappa| < 1$. Still, this choice is useful because when the post-selection fails, we get the (non-normalized) evolution
\begin{align}
    \sin\left(W\sigma_j + \frac{\pi}{4}I\right) = \cos\left(-W\sigma_j + \frac{\pi}{4}I\right).
\end{align}
Noticing that $W(-\kappa) = -W(\kappa)$ for this parametrization, we conclude that the failed trial results in the application of $e^{\kappa\sigma_j}$ instead of the desired $e^{-\kappa\sigma_j}$. In the next section we lay out how we can correct this failure to have a perfect success rate.

\section{Layer Corrections}
\label{app:corrections}
If we use the parametrization \cref{eq:correctionparam}, the success probability is usually smaller. However, when post-selection fails, instead of getting $e^{-\kappa\sigma}$ we get $e^{\kappa\sigma}$. Normally, the failed results are discarded, but in this particular instance we can still use them. Let us say that we want to find the evolution $e^{-\kappa\sigma} |\psi\rangle$. If we fail, we instead get $e^{\kappa\sigma}|\psi\rangle$. Pick a unitary operator $O$ such that $\{O,\sigma\}=0$ and $O|\psi\rangle \sim |\psi\rangle$. Then $Oe^{\kappa\sigma}|\psi\rangle \sim e^{-\kappa\sigma}|\psi\rangle$. To implement this, we can apply a controlled $O$ gate on the post-selection ancilla qubit. If the readout is 0, then ITE succeeded and we do not need to correct for anything. If the readout is 1, then $O$ is applied and we get the required ITE. For one ITE operation, this simple procedure ensures a 100\% success probability by adding one controlled operation to the circuit. Considering the fact that the success probability decreases exponentially with $\beta$, this is a significant improvement on quantum resources. For implementation purposes, we restrict $O$ to a Clifford operation, i.e., we choose $O$ to be a Pauli string. In many cases, we can correct more than one layer. In this section, we provide a formal treatment for the correction process and derive a theoretical limit on the number of layers that can be corrected for a given Hamiltonian.

We start by introducing a lemma regarding the multilayered correction process:
\begin{lemma}\label{lem:multlayercorrection}
    Suppose that we have a sequence of ITE operators acting on some state $e^{-\kappa_j\sigma_j}\dots e^{-\kappa_1 \sigma_1} |\psi\rangle$. If one can find a Pauli string $O$ such that $\{O,\sigma_j\}=0$ and $[O,\sigma_r]=0$ for $r < j$, as well as $O|\psi\rangle \sim |\psi\rangle$, then the $j$th layer can be corrected by added a control $O$ gate.
\end{lemma}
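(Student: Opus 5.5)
The plan is to track the state just after the $j$th block-encoding, conditioned on the ancilla readout, and to commute $O$ back through the earlier layers to the initial state. By the parametrization \cref{eq:correctionparam}, the first $j-1$ layers (corrected or post-selected) produce, up to normalization, the state
\begin{align}
|\phi\rangle = e^{-\kappa_{j-1}\sigma_{j-1}}\cdots e^{-\kappa_1\sigma_1}|\psi\rangle .
\end{align}
The $j$th block-encoding then leaves the register in $\cos(W\sigma_j+\tfrac{s\pi}{4}I)|\phi\rangle\otimes|0\rangle_a - i\sin(W\sigma_j+\tfrac{s\pi}{4}I)|\phi\rangle\otimes|1\rangle_a$. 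As shown in \cref{app:RBMparameters}, the two branches are proportional to $e^{-\kappa_j\sigma_j}|\phi\rangle$ and $e^{\kappa_j\sigma_j}|\phi\rangle$ respectively. I will use the identity $\sin(x+\pi/4)=\cos(-x+\pi/4)$ together with $W(-\kappa)=-W(\kappa)$, handling the sign $s$ separately if needed, since for $s=-1$ the roles of $\sin$ and $\cos$ swap only up to a global sign. The controlled-$O$ gate, conditioned on the ancilla being $|1\rangle$, leaves the first branch alone and maps the second to $Oe^{\kappa_j\sigma_j}|\phi\rangle$.

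The key algebraic step is to show that $Oe^{\kappa_j\sigma_j}|\phi\rangle \propto e^{-\kappa_j\sigma_j}|\phi\rangle$ with a global phase only. Because $O$ is a Pauli string anticommuting with $\sigma_j$, we have $O\sigma_j=-\sigma_j O$, so $Oe^{\kappa_j\sigma_j}=e^{-\kappa_j\sigma_j}O$ (expand the exponential, or write it as $\cosh\kappa_j + \sigma_j\sinh\kappa_j$). Because $[O,\sigma_r]=0$ for $r<j$, $O$ commutes with every $e^{-\kappa_r\sigma_r}$, so it passes through to act on $|\psi\rangle$ and gives $O|\psi\rangle=e^{i\varphi}|\psi\rangle$. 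Hence both branches equal $e^{-\kappa_j\sigma_j}\cdots e^{-\kappa_1\sigma_1}|\psi\rangle$ up to a scalar.

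The step I expect to need the most care is the relative normalization and phase between the two branches. A global phase on the physical state is harmless when the ancilla is discarded. But the magnitudes must be checked, since the branch amplitudes differ ($\cos$ versus $\sin$ norms). After measurement (or tracing out the ancilla), each outcome yields the same normalized physical state, so the mixture is the pure desired state, and this is all that correction requires. I will also note that for the mixed-state (TFD) setting, $O|\psi\rangle\sim|\psi\rangle$ should be replaced by $O\rho O^\dagger=\rho$, which holds trivially for $\rho\propto I$. Finally, because the previous layers may themselves have been corrected by their own controlled Pauli gates, I will make the argument inductive. Assuming layers $1,\dots,j-1$ deliver $|\phi\rangle$ deterministically (up to phase) on every measurement record, the above shows that layer $j$ does too.
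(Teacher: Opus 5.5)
Your proposal is correct and follows the same route as the paper's proof. On the failure outcome, the controlled-$O$ turns $e^{\kappa_j\sigma_j}$ into $e^{-\kappa_j\sigma_j}O$ by anticommutation; $O$ then commutes through the earlier layers and acts on $|\psi\rangle$ as a phase. Your extra bookkeeping makes explicit what the paper leaves implicit: the same normalized state in both branches, induction over previously corrected layers, the $O\rho O^\dagger=\rho$ version for the maximally mixed input, and the sign-$s$ caveat (if the branches swap, simply condition $O$ on $|0\rangle$ instead).
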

\noindent\textbf{Proof} Suppose the post-selection fails at the $j$th exponent. Then one would have applied $e^{\kappa_j\sigma_j}$ instead of the desired $e^{-\kappa_j\sigma_j}$. One can then apply $O$ to correct the error:
\begin{multline}
    Oe^{\kappa_j\sigma_j}e^{-\kappa_{j-1}\sigma_{j-1}}\dots e^{-\kappa_1\sigma_1}|\psi\rangle \\
    \sim e^{-\kappa_j\sigma_j}e^{-\kappa_{j-1}\sigma_{j-1}}\dots e^{-\kappa_1\sigma_1}|\psi\rangle.
\end{multline}

Since $O$ commutes with the first $j-1$ terms, it can pass through without changing anything. Since we only need to apply $O$ when post-selection fails, i.e., when we measure the ancilla in the $|1\rangle$ state, we can add an $O$ gate to the circuit, controlled with the ancilla qubit.\hfill $\blacksquare$

The existence of such operator $O$ is not guaranteed by \cref{lem:multlayercorrection}. In order to prove (or disprove) its existence, we need to do more work. 

To proceed, we introduce the binary symplectic representation of Pauli strings:

\begin{definition}[Binary Symplectic Representation]
    A Pauli string $\sigma$ of length $n$ can be encoded as 2 bit vectors $\vec{a}$ and $\vec{b}$, each of length $n$. Each Pauli matrix $\sigma^i$ at position $i$ is encoded as follows:
    \begin{center}
    \begin{tabular}{c|c|c}
        $\sigma^i$ & $a^i$ & $b^i$ \\
        \hline
        $I$ & 0  & 0 \\
        $X$ & 1  & 0 \\
        $Z$ & 0 & 1 \\
        $Y$ & 1 & 1
   \end{tabular}
   \end{center}
\end{definition}
Notice that the unsigned product of two Pauli matrices is simply given by addition modulo 2. Any Pauli string can thus be represented as a bit vector $\vec{p} = (\vec{b}^T| \vec{a}^T)^T$. For example, $XIZZY$ is encoded as $(00111 |10001)^T$. The power of this representation is the ease of finding unsigned products and commutation relations of Pauli strings. We state how the unsigned product is found.
\begin{remark}\label{rem:product}
    Given two Pauli strings $\sigma_1$ and $\sigma_2$ with the binary symplectic forms $\vec{p}_1=(\vec{b}_1^T|\vec{a}_1^T)^T$ and $\vec{p}_2 = (\vec{b}_2^T|\vec{a}_2^T)^T$, respectively, the unsigned product is given by
    \begin{align}
        \vec{p} \equiv \vec{p}_1 + \vec{p}_2 \mod 2.
    \end{align}
\end{remark}
The representation of the unsigned product as a vector sum is also convenient when we have a set of Pauli strings such that one of them can be written as a product of the others.
\begin{remark}\label{rem:associative}
    Given a set of distinct Pauli strings $\sigma_1, \sigma_2, \dots \sigma_r$ with binary symplectic forms $\vec{p}_1,\dots,\vec{p}_r$ such that $\sigma_r = \sigma_1\dots \sigma_{r-1}$ (up to a phase), then $\dim (\mathrm{span}_{\mathbb{Z}_2}\{\vec{p}_1,\dots,\vec{p}_r\}) = r - 1$, i.e., the vectors are linearly dependent.
\end{remark}
\cref{rem:associative} will prove useful later on. We are also interested in the commutation relations between Pauli strings, so we provide a lemma that expresses the commutator between two Pauli strings in the binary symplectic form.
\begin{lemma}\label{lem:comrelations}
    Given two Pauli strings $\sigma_1$ and $\sigma_2$ with the binary symplectic forms $(\vec{b}_1^T|\vec{a}_1^T)^T$ and $(\vec{b}_2^T|\vec{a}_2^T)^T$, respectively, then one can check whether the commute by looking at the following equation:
    \begin{align}
        \vec{a}_1 . \vec{b}_2 + \vec{b}_1 . \vec{a}_2 \equiv
        \begin{cases}
            0 \mod 2\quad \text{if } [\sigma_1,\sigma_2] = 0, \\
            1 \mod 2\quad \text{if } \{\sigma_1,\sigma_2\} = 0.
        \end{cases}
    \end{align}
\end{lemma}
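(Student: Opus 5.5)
The plan is to reduce the statement to a single-qubit computation and then use the tensor-product structure of Pauli strings. First, up to a global phase, every single-qubit Pauli matrix can be written as $X^{a}Z^{b}$ with $a,b\in\{0,1\}$. This matches the binary symplectic representation: $I=X^0Z^0$, $X=X^1Z^0$, $Z=X^0Z^1$ and $Y=iXZ=iX^1Z^1$. Hence a Pauli string with symplectic form $(\vec b^T|\vec a^T)^T$ equals $c\bigotimes_i X^{a^i}Z^{b^i}$ for some scalar phase $c$. Since scalars commute with everything, the phase $c$ plays no role in whether $\sigma_1\sigma_2=\pm\sigma_2\sigma_1$. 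Removing it is exactly what allows us to ignore the factors of $i$ carried by $Y$.

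Next comes the single-site computation. Using only $ZX=-XZ$, moving $Z^{b_1}$ past $X^{a_2}$ produces a sign $(-1)^{b_1a_2}$, so
\begin{align}
(X^{a_1}Z^{b_1})(X^{a_2}Z^{b_2}) = (-1)^{b_1a_2}X^{a_1+a_2}Z^{b_1+b_2}.
\end{align}
The same computation with the two factors swapped gives $(X^{a_2}Z^{b_2})(X^{a_1}Z^{b_1}) = (-1)^{b_2a_1}X^{a_1+a_2}Z^{b_1+b_2}$. Therefore the two orderings at a single site differ by the sign $(-1)^{a_1b_2+b_1a_2}$. This can be cross-checked against the sixteen cases: for example, $X$ and $Z$ give $1\cdot 1+0\cdot 0=1$ (anticommute), and $Y$ and $Y$ give $1+1\equiv 0$ (commute), as expected.

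Finally, for full strings we have $\sigma_1\sigma_2=\bigotimes_i \sigma_1^i\sigma_2^i$ and $\sigma_2\sigma_1=\bigotimes_i\sigma_2^i\sigma_1^i$, with the global phases identical in both orderings. Thus $\sigma_1\sigma_2 = \left(\prod_i (-1)^{a_1^ib_2^i+b_1^ia_2^i}\right)\sigma_2\sigma_1 = (-1)^{\vec a_1\cdot\vec b_2+\vec b_1\cdot\vec a_2}\,\sigma_2\sigma_1$. An even exponent means $[\sigma_1,\sigma_2]=0$ and an odd exponent means $\{\sigma_1,\sigma_2\}=0$. Because the relation holds with exactly one of the two signs, the dichotomy in the lemma follows, and the exponent can be read modulo 2.

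The only step needing care is the phase bookkeeping for $Y$ and for the overall tensor product. The key point is that the same scalar prefactors appear in both orderings and therefore cancel in the ratio. Everything else is routine once this is made explicit.
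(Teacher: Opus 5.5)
Your proof is correct, and it reaches the result by a different route from the paper's. The paper does not factor the Paulis at all. It enumerates which local pairs make $a_1^ib_2^i$ or $b_1^ia_2^i$ equal to $1$, and observes that their sum is $1$ exactly when $\sigma_1^i,\sigma_2^i$ are distinct non-identity Paulis, $2$ for the pair $(Y,Y)$, and $0$ otherwise. So $\vec a_1\cdot\vec b_2+\vec b_1\cdot\vec a_2 \bmod 2$ counts the anticommuting sites. The paper then invokes, without proof, the standard fact that two strings commute iff that count is even.

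You instead derive the per-site sign $(-1)^{a_1b_2+b_1a_2}$ algebraically from the normal form $X^aZ^b$ and $ZX=-XZ$. You also make the multiplication of signs over the tensor factors explicit, and show that the phases (including the $i$ in $Y=iXZ$) cancel between the two orderings.

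Each approach has its merits. Yours is more self-contained, since it proves the "even number of anticommuting sites" criterion rather than assuming it, and it carries over essentially unchanged to generalized (e.g. qudit) Weyl operators. The paper's case check gives the more intuitive reading of the symplectic form as a parity count of mismatched sites, but it leans on the tensor-product sign argument that you spell out.
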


\noindent\textbf{Proof} First we look at the dot product. $a_1^i b_2^i$ is non-zero only if $a_1^i=b_2^i = 1$, which can only happen if $\sigma_1^i$ is $X$ or $Y$, and if $\sigma_2^i$ is $Z$ or $Y$. If both $\sigma_1^i$ and $\sigma_2^i$ are $Y$, then $b_1^i a_2^i=1$ as well, and we get a contribution of 2 to the sum, which will not affect the result modulo 2. For all the other cases, $b_1^ia_2^i$ will vanish, and we get a contribution of 1 to the sum. Similarly, if $\sigma_1^i$ is $Z$ or $Y$ and $\sigma_2^i$ is $X$ or $Y$, then $b_1^ia_2^i=1$. Unless both of them are $Y$, we get a contribution of 1 to the sum. The sum thus counts the number of different non-identity Paulis between $\sigma_1$ and $\sigma_2$. If the number is even, then $\sigma_1$ and $\sigma_2$ commute. Otherwise, they do not commute.\hfill $\blacksquare$

We have now completed all the necessary steps to look for a Pauli string with the required commutation relations. We provide the following theorem:
\begin{theorem}\label{thm:linearsystem}
    Suppose that we have a set of Pauli strings $\sigma_1,\sigma_2,\dots,\sigma_j$, then a Pauli string $O$ such that $\{O,\sigma_j\}=0$ and $[O,\sigma_r]=0$ for $r<j$ can be found by solving a system of linear equations.
\end{theorem}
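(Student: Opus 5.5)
Using the binary symplectic representation, we write the unknown Pauli string $O$ as $\vec{x}=(\vec{b}_O^T|\vec{a}_O^T)^T\in\mathbb{Z}_2^{2n}$. Each commutation requirement then becomes a single linear equation over $\mathbb{Z}_2$. By \cref{lem:comrelations}, the condition $[O,\sigma_r]=0$ reads $\vec{a}_O\cdot\vec{b}_r+\vec{b}_O\cdot\vec{a}_r\equiv 0 \bmod 2$, and $\{O,\sigma_j\}=0$ reads $\vec{a}_O\cdot\vec{b}_j+\vec{b}_O\cdot\vec{a}_j\equiv 1 \bmod 2$.

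To package these, introduce the symplectic matrix
\begin{align}
\Lambda=\begin{pmatrix}0 & I_n\\ I_n & 0\end{pmatrix},
\end{align}
so that the commutation parity of $\vec{p}$ and $\vec{x}$ is $\vec{p}^{\,T}\Lambda\vec{x}$. Stacking the rows $\vec{p}_r^{\,T}\Lambda$ for $r=1,\dots,j$ gives a $j\times 2n$ matrix $M$. The requirements on $O$ are then exactly the linear system
\begin{align}
M\vec{x}\equiv \vec{e}_j \bmod 2,
\end{align}
where $\vec{e}_j=(0,\dots,0,1)^T$.

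Any solution $\vec{x}$ corresponds, via the encoding table, to a Pauli string. The overall phase can be fixed arbitrarily (e.g.\ chosen Hermitian), since a phase does not affect commutation. The system can be solved by Gaussian elimination over $\mathbb{Z}_2$ in polynomial time in $n$ and $j$, which establishes the claim.

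The real content, and the step I expect to need care, is showing that a solution exists under the hypothesis of the earlier theorem, namely that $\sigma_j$ is not a product of the others. The plan is:
\begin{itemize}
\item $\Lambda$ is invertible, so the rank of $M$ equals the $\mathbb{Z}_2$-rank of $\{\vec{p}_1,\dots,\vec{p}_j\}$.
\item By \cref{rem:product}, $\sigma_j$ is (up to phase) a product of $\sigma_1,\dots,\sigma_{j-1}$ exactly when $\vec{p}_j$ lies in their span; compare \cref{rem:associative}.
\item The system fails to be consistent only if $\vec{e}_j\notin\mathrm{col}(M)$. Equivalently, there is a vector $\vec{y}$ with $\vec{y}^{\,T}M=0$ and $y_j=1$, i.e.\ $\vec{p}_j=\sum_{r<j}y_r\vec{p}_r$, meaning $\vec{p}_j$ is a combination of the earlier rows.
\end{itemize}
So the system is solvable iff $\vec{p}_j\notin\mathrm{span}\{\vec{p}_1,\dots,\vec{p}_{j-1}\}$, which is precisely the stated hypothesis.

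As a byproduct, since rows can be independent in $\mathbb{Z}_2^{2n}$ only up to count $2n$, this also bounds the number of correctable layers. One refines this to $n$ by noting that the $\sigma$'s coming from the Cartan subalgebra mutually commute, so they span an isotropic subspace of dimension at most $n$.
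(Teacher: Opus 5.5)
Your proof is correct and follows the paper's route. Your matrix $M$, with rows $\vec{p}_r^{\,T}\Lambda=(\vec{a}_r^T|\vec{b}_r^T)$, is exactly the paper's $\Sigma=(\Sigma_a\ \Sigma_b)$, and $M\vec{x}\equiv\vec{e}_j \bmod 2$ is the paper's Eq.~\eqref{eq:linearsystem}. Beyond the theorem itself, your consistency argument (finding $\vec{y}$ with $\vec{y}^{\,T}M=0$ and $y_j=1$) is a more rigorous version of the paper's informal discussion before its subsequent corollary. Your isotropic-subspace bound of $n$ is a self-contained substitute for the paper's appeal to the unproved \cref{prop:ZCSA}.
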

\noindent\textbf{Proof} Let $\Sigma_a = (\vec{a}_1\ \dots\ \vec{a}_j)^T$, $\Sigma_b = (\vec{b}_1\ \dots\ \vec{b}_j)^T$, $\Sigma = (\Sigma_a\ \Sigma_b)$, and let $\vec{O} = (\vec{b}^T|\vec{a}^T)^T$ be the binary symplectic representation of $O$. By \cref{lem:comrelations}, the desired commutation relations take the form of a system of $j$ linear equations:
\begin{align}\label{eq:linearsystem}
    \Sigma \vec{O} \equiv \vec{c} \mod 2,
\end{align}
where $\vec{c} = (00\dots 0 1)^T$, and $\Sigma$ is a $j\times 2n$ matrix, where $n$ is the number of qubits we are working with. If any solution exists, then there will be $2^{2n - \mathrm{rank}(\Sigma)}$ different solutions.\hfill$\blacksquare$

To ensure that a solution exists, no inconsistencies between the linear equations can arise. Since $\vec{c}$ only has 1 non-zero component, only inconsistencies involving the last row, i.e., $\sigma_j$ can arise. Specifically, the only way we run into an inconsistency is if the last row is linearly dependent on any of the other rows, or, equivalently, if $\sigma_j$ is a product of a subset of the previous $j-1$ Pauli strings (see \cref{rem:associative}). We state this as the following corollary:
\begin{corollary}
    A solution $O$ exists if and only if $\sigma_j$ is not algebraically related to the other Pauli strings, i.e., it cannot be written as a product of any of the other $j-1$ Pauli strings.
\end{corollary}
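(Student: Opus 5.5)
We reduce the corollary to a solvability statement for the linear system \cref{eq:linearsystem} over $\mathbb{Z}_2$. The matrix there should be the symplectic-twisted one: each row is $\Lambda\vec{p}_r$, with $\Lambda$ the block swap $\vec{b}\leftrightarrow\vec{a}$. Under this convention, \cref{lem:comrelations} says that $O$ anticommutes with $\sigma_r$ exactly when $\vec{p}_r^T\Lambda\vec{O}\equiv 1$, and commutes with it when this is $0$. The standard criterion then applies. The system $\Sigma\vec{O}=\vec{c}$ is solvable if and only if every $\vec{y}\in\mathbb{Z}_2^j$ with $\vec{y}^T\Sigma=0$ also satisfies $\vec{y}^T\vec{c}=0$; equivalently, $\mathrm{rank}(\Sigma)=\mathrm{rank}(\Sigma|\vec{c})$. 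Since $\vec{c}=(0,\dots,0,1)^T$, the condition $\vec{y}^T\vec{c}=0$ is just $y_j=0$. So a solution exists iff no left-kernel vector of $\Sigma$ has $y_j=1$, i.e. iff row $j$ is not a $\mathbb{Z}_2$-combination of rows $1,\dots,j-1$.

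Next we translate this row condition into the product condition. Because $\Lambda$ is invertible, a linear relation among the rows $\Lambda\vec{p}_r$ holds exactly when the same relation holds among the $\vec{p}_r$. So the obstruction becomes $\vec{p}_j=\sum_{r\in S}\vec{p}_r \pmod 2$ for some $S\subseteq\{1,\dots,j-1\}$. By \cref{rem:product}, iterated, this says $\sigma_j=\prod_{r\in S}\sigma_r$ up to a phase, which is the content of \cref{rem:associative}. Conversely, any such product gives a $\mathbb{Z}_2$-dependence with coefficient $1$ on $\vec{p}_j$. This proves both directions: a solution exists iff $\sigma_j$ is not a product of a subset of the others.

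As a sanity check on the ``only if'' direction, we also give a direct argument. If $\sigma_j=\pm\prod_{r\in S}\sigma_r$ (or $\pm i$ times it) and $O$ commutes with every $\sigma_r$ for $r\in S$, then $O$ commutes with their product, and hence with $\sigma_j$. This contradicts $\{O,\sigma_j\}=0$, so no valid $O$ exists in that case.

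The main obstacle is bookkeeping rather than anything substantive. As literally written, the proof of \cref{thm:linearsystem} builds $\Sigma=(\Sigma_a\ \Sigma_b)$ and multiplies it by $\vec{O}=(\vec{b}|\vec{a})$, so the rows are effectively the swapped vectors. We need to state explicitly that this swap is a bijection preserving linear dependence, so that dependence of the rows of $\Sigma$ coincides with dependence of the $\vec{p}_r$. We also need to note that phases are irrelevant, because anticommutation is insensitive to scalar factors.
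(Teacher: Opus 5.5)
Your proof is correct and follows the paper's route. Because $\vec c=(0,\dots,0,1)^T$, consistency of $\Sigma\vec O\equiv\vec c$ reduces to the last row not lying in the $\mathbb{Z}_2$-span of the earlier rows, and \cref{rem:product} and \cref{rem:associative} turn that into the product condition; your left-kernel criterion and your remark that the block swap in $\Sigma=(\Sigma_a\ \Sigma_b)$ preserves linear dependence simply make rigorous what the paper states informally.
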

We also state the main result:
\begin{corollary}\label{cor:condition}
    The $j$th layer of the ITE process $e^{-\kappa_j\sigma_j}e^{-\kappa_{j-1}\sigma_{j-1}}\dots e^{-\kappa_1\sigma_1}|\psi\rangle$ can be corrected, provided that $\sigma_j$ is not algebraically related to the other Pauli strings, and provided that any of the solutions satisfies $O|\psi\rangle \sim |\psi\rangle$.
\end{corollary}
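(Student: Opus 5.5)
The plan is to combine the two ingredients already established: \cref{lem:multlayercorrection}, which reduces the correction of the $j$th layer to the existence of a Pauli string $O$ with $\{O,\sigma_j\}=0$, $[O,\sigma_r]=0$ for $r<j$, and $O|\psi\rangle\sim|\psi\rangle$; and \cref{thm:linearsystem} together with the preceding corollary, which characterize when a Pauli string with the required commutation pattern exists. Under the hypothesis that $\sigma_j$ is not algebraically related to $\sigma_1,\dots,\sigma_{j-1}$, the corollary gives existence of some $O$ solving \cref{eq:linearsystem}. Under the second hypothesis, one of these solutions also satisfies $O|\psi\rangle\sim|\psi\rangle$. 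Then \cref{lem:multlayercorrection} applies verbatim and yields the correction via a controlled-$O$ gate.

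To make the argument self-contained, I would briefly re-verify the existence claim of the corollary rather than only cite it. Write \cref{eq:linearsystem} as $\Sigma\vec O=\vec c$ over $\mathbb{Z}_2$. By \cref{lem:comrelations}, the $r$th row of $\Sigma$ is the symplectic partner of $\vec p_r$, namely $\vec p_r$ with its two halves swapped. The swap is an invertible linear map, so the rows of $\Sigma$ are linearly independent over $\mathbb{Z}_2$ exactly when the $\vec p_r$ are. The system is solvable iff $\vec c$ lies in the column space of $\Sigma$. Equivalently, every left-null vector $\vec y$ (with $\vec y^T\Sigma=0$) must satisfy $\vec y^T\vec c=y_j=0$. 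A left-null vector with $y_j=1$ is precisely a $\mathbb{Z}_2$-dependence expressing row $j$ as a sum of earlier rows. By \cref{rem:product} and \cref{rem:associative}, that is the same as $\sigma_j$ being, up to phase, a product of a subset of the $\sigma_r$ with $r<j$. This step also handles the "only if" direction and the counting $2^{2n-\mathrm{rank}\,\Sigma}$ of solutions.

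The main obstacle is the hypothesis on the earlier rows themselves. If $\sigma_1,\dots,\sigma_{j-1}$ are dependent among themselves, the left null space is nonzero, but its elements then have $y_j=0$, so consistency is unaffected. I would state this explicitly so that only the dependence involving $\sigma_j$ matters.

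The remaining subtlety is that the condition $O|\psi\rangle\sim|\psi\rangle$ must hold for some element of the affine solution set, not an arbitrary one. The statement assumes this rather than proving it, so I would only remark that for the maximally mixed (TFD-purified) input it holds for every $O$. There, $O$ acting on the system half can be replaced by its transpose on the purifying half, which commutes with all the ITE layers. This means any solution of \cref{eq:linearsystem} works, which finishes the proof.
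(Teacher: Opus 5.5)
Your proposal is correct and follows the paper's route: \cref{lem:multlayercorrection} plus solvability of \cref{eq:linearsystem}. Your left-null-space argument, together with the observation that swapping the two halves of $\vec p_r$ preserves linear dependence, makes rigorous the paper's one-line claim that an inconsistency can only come from the last row depending on the earlier ones. One caveat concerns your closing aside: for the TFD purification $|\Phi\rangle$ and any non-identity Pauli string $O$, the state $(O\otimes I)|\Phi\rangle$ is orthogonal to $|\Phi\rangle$ (since $\mathrm{Tr}\,O=0$), so $O|\psi\rangle\sim|\psi\rangle$ does not literally hold there; what your identity $(O\otimes I)|\Phi\rangle=(I\otimes O^{T})|\Phi\rangle$ actually shows is that the corrected state differs from the target only by a unitary on the untouched purifying register, which leaves the system's reduced state unchanged (the paper's density-matrix version of this is $O\rho O^\dagger=\rho$ for $\rho\propto I$).
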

\cref{thm:linearsystem} also provides a theoretical limit on the number of layers than can be corrected. If the ITE layers are sequentially corrected, then by \cref{cor:condition}, there can be no algebraic relations between the corrected Pauli strings. This in turn means that the coefficient matrix $\Sigma$ (see \cref{eq:linearsystem}) will always have full rank, i.e., at the $rth$ step, $\mathrm{rank}(\Sigma)=r$. This leads to the following:
\begin{corollary}
    The maximum number of layers that can be corrected in the ITE process $e^{-\kappa_j\sigma_j}e^{-\kappa_{j-1}\sigma_{j-1}}\dots e^{-\kappa_1\sigma_1}|\psi\rangle$ is $2n$, where $n$ is the number of qubits.
\end{corollary}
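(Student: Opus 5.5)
The bound comes from counting how many Pauli strings can be corrected at once, using \cref{thm:linearsystem} and \cref{cor:condition}. Suppose the layers $\sigma_{j_1},\dots,\sigma_{j_m}$ are all corrected. The claim is that their binary symplectic vectors $\vec p_{j_1},\dots,\vec p_{j_m}\in\mathbb{Z}_2^{2n}$ must be linearly independent over $\mathbb{Z}_2$. Since $\mathbb{Z}_2^{2n}$ has dimension $2n$, this forces $m\le 2n$, which is the bound.

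The main step is independence. By \cref{cor:condition}, correcting layer $j_k$ requires that $\sigma_{j_k}$ not be a product of any subset of the earlier strings $\sigma_r$, $r<j_k$. These earlier strings include in particular the previously corrected $\sigma_{j_1},\dots,\sigma_{j_{k-1}}$. By \cref{rem:product} and \cref{rem:associative}, being a product of a subset means exactly that $\vec p_{j_k}$ lies in the $\mathbb{Z}_2$-span of the corresponding vectors. So for each $k$, $\vec p_{j_k}\notin\mathrm{span}_{\mathbb{Z}_2}\{\vec p_{j_1},\dots,\vec p_{j_{k-1}}\}$.

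This gives independence by induction on $k$. If $\{\vec p_{j_1},\dots,\vec p_{j_{k-1}}\}$ is independent and $\vec p_{j_k}$ lies outside its span, the enlarged set is also independent. Hence the rank of the coefficient matrix $\Sigma$ in \cref{eq:linearsystem}, restricted to the corrected rows, grows by exactly one at each corrected layer, as the text preceding the corollary notes. Because $\Sigma$ has $2n$ columns, its rank cannot exceed $2n$, so at most $2n$ layers can be corrected.

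An equivalent view of the same step: the correction operators $O_k$ and the corrected strings $\sigma_{j_k}$ satisfy a triangular pattern of commutation relations, with $O_k$ anticommuting with $\sigma_{j_k}$ and commuting with the earlier corrected strings. By \cref{lem:comrelations}, this makes the symplectic Gram matrix between the two families triangular with ones on the diagonal, hence nonsingular. That again forces the $\vec p_{j_k}$ to be independent.

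The delicate point is what the condition in \cref{cor:condition} applies to. I would stress that it must hold relative to every earlier layer, corrected or not, since the corrected layers are not assumed to be consecutive. For an upper bound, though, only the weaker requirement relative to the previously corrected layers is needed, and that is what the argument above uses.

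Optionally, I would remark that the bound is attained only in the degenerate sense that Pauli strings of length $n$ give a $2n$-dimensional space. For commuting layers, such as those coming from the Cartan subalgebra, the vectors span an isotropic subspace, which suggests the sharper bound of $n$ stated in the main text. I would mention this without proof rather than make it part of the argument.
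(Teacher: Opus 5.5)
Your proposal is correct and follows essentially the paper's route: the corrected strings have $\mathbb{Z}_2$-independent symplectic vectors, so the rank of $\Sigma$ grows by one per corrected layer and is capped by its $2n$ columns; your induction, your handling of non-consecutive corrected layers, and the triangular Gram-matrix check just make the paper's brief rank remark explicit. Like the paper, you prove only the upper bound, which is in fact attained (e.g.\ by layers $Z_1,X_1,\dots,Z_n,X_n$ with correctors $X_1,Z_1,\dots,X_n,Z_n$), so your aside that it is attained ``only in a degenerate sense'' should be dropped.
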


In our algorithm, we find a KHK decomposition for the Hamiltonian, which allows us to tranform the ITE generator into an element of an Abelian Lie algebra. This further lowers the maximum number of layers that could possibly be corrected. We state, without proof, a useful fact about the Cartan subalgebras of $\mathfrak{su}(2^n)$:

\begin{proposition}\label{prop:ZCSA}
    Any maximal Abelian subalgebra -- a Cartan subalgebra -- of $\mathfrak{su}(2^n)$ is equivalent, under a similarity transformation, to the subalgebra spanned by tensor products of $Z$ Pauli matrices.
\end{proposition}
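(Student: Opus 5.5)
The plan is to work directly with the matrix realization $\mathfrak{su}(2^n)=\{A\in\mathbb{C}^{N\times N}: A^\dagger=-A,\ \mathrm{Tr}A=0\}$, with $N=2^n$. I will show that a maximal Abelian subalgebra $\mathfrak{h}$ is unitarily conjugate to the algebra of traceless diagonal skew-Hermitian matrices. Then I will identify that algebra with $\mathrm{span}_{i\mathbb{R}}$ of the non-identity $Z$-strings $Z_S\equiv\prod_{j\in S}Z_j$, $S\neq\emptyset$. Since the similarity transformation is a unitary conjugation $A\mapsto VAV^\dagger$, it is a Lie algebra automorphism of $\mathfrak{su}(2^n)$. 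In particular it preserves both commutativity and maximality.

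\emph{Step 1 (simultaneous diagonalization).} Every element of $\mathfrak{h}$ is skew-Hermitian, hence normal, and the elements of $\mathfrak{h}$ commute pairwise. By the standard theorem on commuting families of normal matrices, there is a single unitary $V$ such that $V^\dagger A V$ is diagonal for every $A\in\mathfrak{h}$. It suffices to apply this to a finite basis of $\mathfrak{h}$. Thus $V^\dagger\mathfrak{h}V\subseteq\mathfrak{d}$, where $\mathfrak{d}$ is the set of traceless diagonal matrices with purely imaginary entries.

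\emph{Step 2 (maximality forces equality).} The set $\mathfrak{d}$ is itself an Abelian subalgebra of $\mathfrak{su}(2^n)$, since diagonal matrices commute. The conjugated algebra $V^\dagger\mathfrak{h}V$ is still maximal Abelian, because conjugation is an automorphism. Since it is contained in the Abelian algebra $\mathfrak{d}$, maximality gives $V^\dagger\mathfrak{h}V=\mathfrak{d}$. A consequence is that every Cartan subalgebra has dimension $N-1=2^n-1$. This step is the one place where care is needed. One must make sure that ``Cartan subalgebra'' is being used in the sense of the statement, namely maximal Abelian in the compact real form. In this setting every element is semisimple, so the usual definitions agree.

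\emph{Step 3 (identify $\mathfrak{d}$ with the $Z$-strings).} Each $Z_S$ is diagonal with entries $\pm1$. Under the Hilbert--Schmidt inner product the strings are orthogonal, $\mathrm{Tr}(Z_SZ_T)=\mathrm{Tr}(Z_{S\triangle T})=2^n\delta_{S,T}$. Hence the $2^n$ strings $\{Z_S\}$ are linearly independent and form a basis of the $2^n$-dimensional space of real diagonal matrices; this is just the invertibility of the Walsh--Hadamard transform. Among them, only $Z_\emptyset=I$ has nonzero trace. The traceless real diagonal matrices are therefore exactly $\mathrm{span}_{\mathbb{R}}\{Z_S:S\neq\emptyset\}$, and multiplying by $i$ gives $\mathfrak{d}=\mathrm{span}_{i\mathbb{R}}\{Z_S:S\neq\emptyset\}$. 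Combining with Step 2, $\mathfrak{h}=V\,\mathrm{span}_{i\mathbb{R}}\{Z_S\}\,V^\dagger$, which is the claim.

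I expect no real obstacle. The only substantive input is the simultaneous unitary diagonalization in Step 1, which is standard, together with the maximality argument in Step 2. The remaining work is the bookkeeping in Step 3.
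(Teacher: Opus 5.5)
The paper gives no proof to compare against: it states this proposition explicitly ``without proof.'' Your argument is correct and complete. Simultaneous unitary diagonalization of the commuting skew-Hermitian family, then maximality against the abelian algebra $\mathfrak{d}$ of traceless imaginary diagonal matrices, then the Walsh--Hadamard identification $\mathfrak{d}=\mathrm{span}_{i\mathbb{R}}\{Z_S:S\neq\emptyset\}$ is the standard proof. Your caveat about ``maximal abelian'' versus ``Cartan'' is handled correctly, because every element of the compact form is semisimple.

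Your write-up adds something beyond the bare statement: the dimension count $\dim\mathfrak{h}=2^n-1$. This is what the paper actually needs in its next sentence, where it concludes that at most $n$ commuting Pauli strings can be algebraically independent. Conjugacy by an arbitrary, non-Clifford $V$ says nothing directly about Pauli strings, so the proposition alone does not give that conclusion. Your count does.

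\begin{itemize}
\item Suppose $k$ mutually commuting Pauli strings are independent over $\mathbb{Z}_2$.
\item Their $2^k-1$ nontrivial products are distinct, Hermitian, traceless, pairwise commuting and Hilbert--Schmidt orthogonal.
\item So $i$ times them span an abelian subalgebra of dimension $2^k-1$, which lies inside some Cartan subalgebra.
\item Hence $2^k-1\le 2^n-1$, i.e.\ $k\le n$.
\end{itemize}

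The same bound also follows within the paper's own binary symplectic framework. The $k$ vectors span an isotropic subspace of $\mathbb{Z}_2^{2n}$, and such a subspace has dimension at most $n$.
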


This statement is useful because it tells us that we can have at most $n$ commuting Pauli strings that are not algebraically related. This leads to the final result of the section:
\begin{corollary}
    For commuting generators, the maximum number of layers that can be corrected in the ITE process $e^{-\kappa_j\sigma_j}e^{-\kappa_{j-1}\sigma_{j-1}}\dots e^{-\kappa_1\sigma_1}|\psi\rangle$ is $n$, where $n$ is the number of qubits.
\end{corollary}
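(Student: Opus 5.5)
We read the statement as: if the corrected layers carry pairwise commuting generators, then at most $n$ of them can be corrected. The plan is to bound the rank of the coefficient matrix $\Sigma$ in \cref{eq:linearsystem}. Suppose layers $\sigma_{j_1},\dots,\sigma_{j_m}$ are all corrected, in the sequential sense discussed above. By \cref{cor:condition}, each corrected $\sigma_{j_r}$ cannot be written as a product of the earlier strings. Hence, by \cref{rem:associative}, the binary symplectic vectors $\vec p_{j_1},\dots,\vec p_{j_m}$ are linearly independent over $\mathbb{Z}_2$. It therefore suffices to show that any family of pairwise commuting Pauli strings spans a subspace of $\mathbb{Z}_2^{2n}$ of dimension at most $n$.

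\textbf{Step 1: isotropy.} By \cref{lem:comrelations}, pairwise commutation means $\omega(\vec p_r,\vec p_s)=\vec a_r\cdot\vec b_s+\vec b_r\cdot\vec a_s\equiv 0$ for every pair. The form $\omega$ is bilinear over $\mathbb{Z}_2$, so it vanishes on the whole span $V$ of the vectors. Thus $V$ is an isotropic subspace of the symplectic space $(\mathbb{Z}_2^{2n},\omega)$.

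\textbf{Step 2: dimension bound.} The form $\omega$ is nondegenerate: it pairs each $a$-coordinate with the matching $b$-coordinate through the matrix $J=\begin{pmatrix}0&I\\ I&0\end{pmatrix}$, which is invertible. Consequently $\dim V+\dim V^{\perp}=2n$. Isotropy gives $V\subseteq V^\perp$, so $2\dim V\le 2n$ and $\dim V\le n$. This is the crux of the argument. As a cross-check, the same bound follows from \cref{prop:ZCSA}: independent commuting Pauli strings generate an abelian group of $2^{\dim V}$ linearly independent commuting operators. These span an abelian subalgebra of the $2^n$-dimensional Cartan subalgebra, and one could argue the bound that way instead. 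I prefer the symplectic linear-algebra route because it is elementary and avoids the unproven proposition.

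\textbf{Step 3: conclusion and tightness.} Combining the steps gives $m\le\dim V\le n$. The bound is attained, which confirms that $n$ is the true maximum. Take the layers $Z_1,\dots,Z_n$ acting on the maximally mixed state. For the $r$th of them, the correction $O=X_r$ anticommutes with $Z_r$ and commutes with all the others, and $O|\psi\rangle\sim|\psi\rangle$ holds trivially as noted in the main text.

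The main subtlety is the interpretation of "corrected". If some non-corrected layers in between are algebraically dependent, they still appear as rows of $\Sigma$. Only the corrected rows must be independent, however, and those corrected rows all lie in the same isotropic span. So the bound counts exactly the corrected layers.
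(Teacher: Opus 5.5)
Your proof is correct, but it reaches the bound by a different route from the paper's.

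The paper gets the count $n$ from \cref{prop:ZCSA}, which it states without proof: every Cartan subalgebra of $\mathfrak{su}(2^n)$ is conjugate to the span of $Z$-strings. From this it reads off that at most $n$ commuting Pauli strings can be algebraically independent. You stay instead inside the binary symplectic formalism the appendix already sets up. \cref{lem:comrelations} makes the span $V$ of the corrected strings isotropic for the nondegenerate form $\omega$, and $V\subseteq V^\perp$ together with $\dim V+\dim V^\perp=2n$ gives $\dim V\le n$.

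What your route buys is self-containedness. It also closes a step the paper leaves implicit. The similarity transformation in \cref{prop:ZCSA} is a general unitary and need not send Pauli strings to Pauli strings. So turning the conjugacy statement into a count of independent Pauli strings really requires the dimension count in your cross-check: $m$ independent commuting strings generate $2^m$ linearly independent commuting operators, and these must fit inside a maximal abelian subalgebra, giving $2^m\le 2^n$.

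The paper's route, in exchange, ties the bound to the Lie-algebraic structure CaRBM exploits. Since the layers come from $e^{-\beta h}$ with $h$ in a Cartan subalgebra, it explains why the Cartan step itself lowers the ceiling from $2n$ to $n$.

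You also go beyond the paper in two places. You handle uncorrected, possibly dependent layers interleaved with corrected ones; the paper's full-rank argument tacitly assumes every layer up to step $r$ is corrected. And you show the bound is attained with $Z_1,\dots,Z_n$ and $O=X_r$, whereas the paper proves only the upper bound.

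One small citation fix: the necessity you use (a corrected $\sigma_{j_r}$ is not a product of earlier strings) comes from the preceding ``if and only if'' corollary. \cref{cor:condition} states only a sufficient condition.
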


\section{Gross-Neveu Model}
\label{sec:GN}
To get to the discretized GN model we start with the general form of the continuum Hamiltonian for one Dirac fermion:
\begin{equation}
    H=\int dx\; i\bar{\psi} \alpha\partial_x \psi+m \bar{\psi}\beta\psi 
\end{equation}
To discretize this we replace 
$\partial\to \Delta_{n^\prime\;n}=\frac{1}{2}\left(\delta_{n^\prime\;n+1}-\delta_{n^\prime\;n-1}\right)$ and spin diagonalize on lattice with the unitary
transformation $\psi(n)=(\alpha)^n\lambda(n)$ following Kogut-Susskind~\cite{kogut1975hamiltonian,susskind1977lattice} formulation where 
\begin{equation}
    H=\sum_n \,{\lambda}^\dagger(n)\frac{i}{2}\left[\lambda(n+1)+\lambda(n-1)\right]+
    m{\lambda}^\dagger(n)\beta\lambda(n)\end{equation}

Next, we allocate the 2 spin components to even and odd lattice sites corresponding to the choice $\alpha=X$, which gives us the kinetic term as 
\begin{equation}
    \lambda_1^\dagger(n)(\lambda_2(n+1)-\lambda_2(n-1)) + \lambda_2(n)^\dagger(\lambda_1(n+1)-\lambda_1(n-1))
\end{equation}

 For the mass term, we can choose any matrix for $\beta$ that anti-commutes with $\alpha$ and squares to one.
The conventional choice is $\beta=Z$, which generates the following mass term
\begin{equation}
    \lambda_1^\dagger(n)\lambda_1(n) - \lambda_2^\dagger(n)\lambda_2(n) 
\end{equation}

If we denote $\chi_{\rm even}=\lambda^1$ and $\chi_{\rm odd}=\lambda^2$ this can be trivially written  as $m\left(-1\right)^n \lambda^\dagger(n)\lambda(n)$ (i.e the mass is
plus or minus m corresponding to the eigenvalue of $Z$) and the whole action becomes 
\begin{align}
    H=\frac{1}{2}\sum_{n}\chi^\dagger(n)\left[\chi(n+1)+\chi(n-1)\right]\\
    +m\sum_n\left(-1\right)^n\chi^\dagger(n)\chi(n) \nonumber
\end{align}
For two flavors of Dirac fermion equipped with a four-fermion term  the Hamiltonian looks
like
\begin{align}
    H=\frac{1}{2}\sum_{n}\sum_{a=1}^2\chi^{a\dagger}(n)\left[\chi^a(n+1)+\chi^a(n-1)\right]\\ 
    -G^2\chi^{1\dagger}(n)\chi^1(n)\chi^{2\dagger}(n)\chi^2(n) \nonumber
\end{align}
which is identified as the $N_f=2$ Gross-Neveu model in one spatial dimension.~\cite{gross1974dynamical}
Now we add  a chemical potential that will mix $\lambda_1$ and $\lambda_2$  
\begin{equation}
    H=\int dx\; i\bar{\psi} \alpha\partial_x \psi+m \bar{\psi}\beta\psi +\mu\bar{\psi}\gamma\psi 
\end{equation}

Notice that the usual choice of the chemical potential which is $\bar{\psi}\gamma_0\psi$ would just map into the mass, and instead we choose $\gamma=Y$, another choice could be $\gamma=\gamma_5$.~\cite{susskind1977lattice} Starting with $\gamma=Y $ we get 
\begin{equation}
    \bar{\psi}\gamma_5\psi = \lambda_1^\dagger\lambda_2 -\lambda_2^\dagger \lambda_1 = (-1)^n(\chi^\dagger(n) \chi(n+1) - h.c.)
\end{equation}

Unlike the usual mass term, this term mixes $\lambda_1$ and $\lambda_2$ and reproduces the expected GN phase diagram. 

Another aspect of this term is that while it still breaks the chiral symmetry of the model in the continuum limit,  it respects the sublattice symmetry of the model which can be realized by even shifts of the lattice index, which is related to the application of $\gamma_5$ operator to our staggered fermions.~\cite{roose2021lattice}

To simulate this system on a quantum computer we first need to rewrite the theory
in terms of Pauli matrix or qubit operators. We
use the Jordan-Wigner transformation~\cite{jordan1928paulische,dargis1998fermionization}
\begin{align}
    \chi_n(a)&= \prod_{b<a}P^{\left(\sigma^b\right)}(L) \prod_a  P^{\left(\sigma^a\right)}(n-1) S_+^a(n),
\end{align}
where 
\begin{equation}
    P^{\left(\sigma^a\right)}(n)=\prod_{y=1}^nZ^a(y)
\end{equation}
and $S_\pm=\frac{1}{2}\left(X\pm iY\right)$ and $\sigma^a$ denotes the Pauli matrices. It is straightforward, 
to show that this representation respects the fundamental anticommutator required for
fermion operators
\begin{equation}
    [\chi^{a\dagger}(x),\chi^b(y)]_+=\delta_{xy}\delta^{ab}.
\end{equation}

\end{document}

%% file: main.bbl
%